\newtheorem{theorem}{Theorem}[section]
\newtheorem*{mymeta}{Meta-Statement}
\newtheorem{proposition}[theorem]{Proposition}
\newtheorem{conjecture}[theorem]{Conjecture}
\theoremstyle{definition}
\newtheorem{remark}[theorem]{Remark}
\newtheorem{example}[theorem]{Example}
\numberwithin{equation}{section}
\newcommand{\Hom}{\operatorname{Hom}}
\title{\textbf{Gauge theory and mirror symmetry}}
\author
{Constantin Teleman\thanks{The author thanks the MSRI for its hospitality during 
the writing of this paper. The work was partially supported by NSF grant DMS-1007255.}}
\begin{document}



\maketitle

\begin{abstract}
\noindent
Outlined here is a description of \emph{equivariance} in the world of $2$-dimensional 
extended topological quantum field theories, under a topological action of compact 
Lie groups. In physics language, I am gauging the theories --- coupling them to a principal 
bundle on the surface world-sheet. I describe the data needed to gauge the theory, as 
well as the computation of the gauged theory, the result of integrating over all bundles. 
The relevant theories are `$A$-models', such as arise from the Gromov-Witten theory of a 
symplectic manifold with Hamiltonian group action, and the mathematical description starts 
with a group action on the generating category (the Fukaya category, in this example) which 
is factored through the topology of the group. Their mirror description involves 
holomorphic symplectic manifolds and Lagrangians related to the Langlands dual group. An 
application  recovers the complex mirrors of flag varieties proposed by 
Rietsch.
\end{abstract}

\section{Introduction}

This paper tells the story of equivariance, under a compact Lie group, in the higher algebra 
surrounding \emph{topological quantum field theory} (TQFT). Speaking in riddles, if
$2$-dimensional TQFT is a higher analogue of cohomology (the reader may think of the Fukaya-Floer 
theory of a symplectic manifold as refining ordinary cohomology), my story of gauged 
TQFTs is the analogue of equivariant cohomology. The case of finite groups, well-studied in the 
literature \cite{tur}, provides a useful and easy reference point, but the surprising features 
of the continuous case, such as the appearance of holomorphic symplectic spaces and Langlands 
duality, are missing there. 

From another angle, this is a story of the categorified representation theory of a compact Lie 
group $G$, with the provision that representations are \emph{topological}: the $G$-action 
(on a linear category) factors through the topology of $G$. One floor below, where the group acts 
on vector spaces, these would be not the ordinary complex representations of $G$, but the 
local systems of vector spaces on the classifying space $BG$. There is no 
distinction for a finite group, but in the connected case, $BG$ is simply connected, and we 
must pass to the derived category to see anything interesting. The same will hold in the 
categorified story, where simply connected groups will appear to have trivial representation 
theory, before deriving. This observation suggests a 
straightforward homological algebra approach to the investigation, worthy of featuring as 
an example in a graduate textbook. Pursuing that road, however, leads to faulty predictions, 
even in the simplest case of pure gauge theory of a point (topological Yang-Mills theory). 
One reason for this failure is a curious predilection of interesting TQFTs to break the obvious 
$\mathbb{Z}$-grading information present, collapsing it to a $\mathbb{Z}/2$ grading, 
or encoding it in more labored form (as in the Euler field of Gromov-Witten theory \cite{man}). 
The result is that homological algebra,  which localizes the spectrum of a graded ring to 
its degree zero part, loses relevant information, which needs restoration by ulterior guesswork. 
In our example, we will see the homological information in the neighborhood of a Lagrangian 
within a certain holomorphic symplectic manifold, whereas most of the interesting `physics' 
happens elsewhere.

The emerging geometric picture for this categorical topological representation theory 
is surprisingly attractive. Representations admit a character theory, 
but characters are now coherent sheaves on a manifold related to the conjugacy classes, instead 
of functions. The  manifold in question, the \emph{BFM space} of the Langlands dual Lie group 
$G^\vee$, introduced in \cite{bfm}, is closely related to the cotangent bundle to the space of conjugacy 
classes in the complex group $G^\vee_{\mathbb{C}}$. (For $\mathrm{SU}_2$, it is the Atiyah-Hitchin 
manifold studied in detail in \cite{athit}.) Multiplicity spaces of $G$-invariant maps between linear 
representations are now replaced by multiplicity categories, whose `dimensions' are the $\Hom$-spaces 
in the category of coherent sheaves. (In interesting examples, they are the Frobenius algebras 
underlying $2$-dimensional TQFTs.) There is a preferred family of simple 
representations, which in a sense exhausts the space of representations: they foliate 
the $BFM$ space. Every such representation is `symplectically induced' from a one-dimensional 
representation of a certain Levi subgroup of $G$: more precisely, it is the Fukaya category of 
a flag variety of $G$. This is formally similar to the Borel-Weil construction of irreducible 
representations of $G$ by holomorphic induction. Recall that in that world there is another kind 
of ``$L^2$-induction" from closed subgroups, which is right adjoint to the restriction functor. 
The counterpart of na\"ive induction also exists in our world, and gives the (curved) 
\emph{string topologies} \cite{cs} of the same flag varieties, instead of their Fukaya categories. 

This story might seem a bit unhinged, were it not for the appearance of the governing structure 
in the work of Kapustin, Rozansky and Saulina \cite{krs}. Studied there are boundary conditions in 
the $3$-dimensional TQFT associated to a holomorphic symplectic manifold $X$, known as Rozansky-Witten 
theory \cite{rozwit}. Among those are holomorphic Lagrangian sub-manifolds of $X$, or more generally, 
sheaves of categories over such sub-manifolds. (The full $2$-category of all boundary conditions 
does not yet have a precise definition.) The relation to gauge theory is summarized by the observation 
that gaugeable $2$-dimensional field theories are topological boundary conditions for pure $3$-dimensional 
topological gauge theory. The reader may illustrate this with an easy example: the representations 
of a finite group $F$ are the boundary conditions for pure $F$-gauge theory in $2$ dimensions; yet 
these representations are exactly the $1$-dimensional topological field theories (vector spaces) 
which admit $F$-symmetry. Modulo the \cite{krs} description of Rozansky-Witten theory, my entire 
story is underpinned by the following
\begin{mymeta} 
Pure topological gauge theory in $3$ dimensions for a compact Lie group $G$ is equivalent to the 
Rozansky-Witten theory for the BFM space of the Langlands dual Lie group $G^\vee$. 
\end{mymeta}

\noindent
I shall offer no elucidation of this, beyond its inspirational value; however, strong 
indications of this statement have been known in the physics literature, at 
least for special $G$ \cite{seiwit, argfar, martwarn}. Formulating this statement 
in a mathematically useable way will require an excursion through much preliminary 
material in \S2-5. A small reward will come in \S6, where we illustrate how these 
ideas can lead to `real answers'.

A closing warning is that the results in this paper are partly experimental: enough 
examples have been checked to rule out plausible alternatives, but I do not claim to know 
proofs in full generality. In fact, the status of Floer-Fukaya theory makes such claims difficult 
to sustain, and the author has no special expertise on that topic. In topological cases, 
such as for string topology (Fukaya theory of cotangent bundles), precise statements and 
proofs are possible (and easy). More generally, the results apply to the abstract setting 
of differential graded (or $A_\infty$-categories) with topological $G$-action, the question 
being to what extent the Fukaya category of a symplectic manifold with Hamiltonian $G$-action 
qualifies. (For non-compact manifolds, this depends on the `wrapping' condition at 
$\infty$.) If nothing else, the paper can be read as a template for what a nice world should 
look like. 

\subsection{Acknowledgements} I thank M.~Abouzaid, D.~Ben-Zvi, 
K.~Fukaya, K.~Hori, A.~Kapustin, A.~Neitzke, C.~Woodward for helpful comments and conversation, 
and am especially indebted to E.~Witten for explaining the relation to $4$-dimensional gauge theory and 
the Nahm equations. Many thanks are due to the Geometry group at UT Austin for the invitation to 
lecture there, where a primitive version of this material was first outlined  \cite{taus}; 
for later developments, see \cite{tlum}.

\section{Topological field theory}
Topological field theory, introduced originally by Atiyah\cite{at}, Segal \cite{seg} 
and Witten \cite{wit}, promised to systematize a slew of new $3$-manifold invariants. The 
invariants of a $3$-manifold $M$ are thought to arise from \emph{path integrals} over a space 
of maps from $M$ to a target $X$. The latter is often a manifold, but in interesting cases, 
related to \emph{gauge theory}, it is a stack. One example relevant for us will have $X$ a 
holomorphic symplectic manifold, leading to \emph{Rozansky-Witten theory} \cite{rozwit}.
The $2$-dimensional version of this notion quickly found application to the counting of 
holomorphic curves, the Gromov-Witten invariants of a symplectic manifold $X$: these 
are controlled by a family of TQFTs parametrized by the even cohomology space $H^{ev}(X)$. 

\subsection{Extended TQFTs}\label{tqftbasics}
Both theories above have a bearing on my story, once they are \emph{extended down to points}. 
In the original definition, a $d$-dimensional TQFT is a symmetric, strongly monoidal 
functor form the category whose objects are closed $(d-1)$-manifolds and whose morphisms are 
compact $d$-bordisms, to the category $\mathfrak{Vect}$ of complex finite-dimensional vector 
spaces; the monoidal structures are disjoint union and tensor product, respectively. (Some tangential 
structure on manifolds is chosen, as part of the starting datum.) Fully extending the theory 
means extending this functor to one from the \emph{bordism $d$-category} $Bord_d$, whose objects 
are points and whose $k$-morphisms are compact $k$-manifolds with corners (and some tangential structure), to 
some de-looping of the category of vector spaces: a symmetric monoidal $d$-category whose top 
three layers are complex numbers, vector spaces and linear categories, or a differential graded 
(dg) version of this. When $d=2$, 
which most concerns us, the target is usually the $2$-category $\mathfrak{LCat}$ of linear 
dg categories, linear functors and natural transformations. The reader may consult Lurie \cite{lur}, 
references therein and the wide following it inspired, for a precise setting of higher categories. 

\begin{example}[$2$-dimensional gauge theory with finite gauge group $F$]
This theory is defined for unoriented manifolds; among others, the functor $Z_F$ which sends 
a point $*$ to the category $\mathrm{Rep}(F)$ of (finite-dimensional) linear representations 
of $F$, the half-circle bordism $\subset:\emptyset\to \{*, *'\}$ to the functor $\mathfrak{Vect}
\to \mathfrak{Rep}(F)\otimes \mathfrak{Rep}(F)$ sending $\mathbb{C}$ to the ($2$-sided) regular 
representation of $F$,  the opposite bordism $\supset: \{*, *'\}\to \emptyset$ to the functor 
$\mathfrak{Rep}(F)\otimes \mathfrak{Rep}(F) \to\mathfrak{Vect}$ sending $V\otimes W$ to the 
subspace of $F$-invariants therein. 
A closed surface gives a number, which is the (weighted) count of principal $F$-bundles. 
See for instance \cite{fhlt} for a uniform construction of the complete functor and generalizations.  
\end{example}

The first theorem of \cite{lur} is that an such extended TQFT $Z:Bord_d\to 
??$ is determined by its value $Z(+)$ on the point, at least in the setting of \emph{framed} 
manifolds. The object $Z(+)$, which we call the generator of $Z$, must satisfy some 
strong (\emph{full dualizability}) conditions, but carries no additional structure, 
beyond being a member of an ambient $d$-category. 

On the other hand, the ability to pass to surfaces with less structure than a framing on their 
tangent bundle forces additional structure on the generator $Z(+)$. The point (conceived 
together with an ambient germ of surface) carries a $2$-framing, on which the group 
$\mathrm{O}(d)$ acts. Lurie's second theorem states that, given a tangential structure, encoded 
in a homomorphism $G\to \mathrm{O}(d)$, factoring the theory $Z$ from $Bord_d$ through the category 
$Bord_d^{\,G}$ of $d$-folds with $G$-structure is equivalent to exhibiting $Z(+)$ as a fixed-point 
for the $G$-action on the image of TQFTs in the target$d$- category (more precisely, the 
sub-groupoid of fully dualizable objects and invertible morphisms). 

The best-known case of oriented surfaces, when $G=\mathrm{SO}(2)$, requires a \emph{Calabi-Yau} 
structure on $Z(+)$. This can be variously phrased: as a trivialization of the \emph
{Serre functor}, which is an automorphism of any fully dualizable linear dg category (see 
Remark~\ref{serre} below); alternatively, as a linear functional on the cyclic 
homology of $Z(+)$ whose restriction to Hochschild homology $HH_*(Z(+))$ induces 
a perfect pairing on $\Hom$ spaces:
\[
\Hom(x,y)\otimes \Hom(y,x) \to \Hom(x,x) \to HH_* \to \mathbb{C}. 
\]
This case of Lurie's theorem recovers earlier results of Costello, Kontsevich and Hopkins-Lurie 
\cite{cos, ks}.   

The Hochschild homology $HH_*(Z(+))$ is meaningful in a different guise: it is the space of 
states $Z(S^1)$ of the theory, for the circle with the radial framing. The circle is pictured 
here with a germ of surrounding surface, and therefore carries a $\mathbb{Z}$'s worth of 
framings, detected by a winding number. The Hochschild cohomology $HH^*$ goes with the 
blackboard framing, and the space for the framing with winding number $n$ is $HH^*$ of the 
$n$th power of the Serre functor. (Of course, for oriented theories there is no framing 
dependence, and these spaces agree.) 

\subsection{Topological group actions}
An important point is that the action of $\mathrm{O}(2)$ (and thus $G$) on the target 
category $Z(+)$ is \emph{topological}, or factored through its topology. There are several 
ways to formulate this constraint, which is vacuous when $G$ is discrete. The favored 
formulation will depend on the nature of the target category; in the linear case, and when 
$G$ is connected, we will provisionally settle for the one in Theorem~\ref{e2action} below. 
Combined with Statement~\ref{omegaG} below, this generalizes an old result of Seidel~\cite{seid} 
on Hamiltonian diffeomorphism groups. 

Here are some alternative definitions:

\begin{enumerate}
\item We can ask for a \emph{local trivialization} of the action in a contractible neighborhood 
of $1\in G$, an isomorphism with the trivial action of that same neighborhood (up to coherent 
homotopies of all orders).
\item Using the action to form a bundle of categories with fiber $Z(+)$ over the classifying 
stack $BG$, we ask for an integrable flat connection on the resulting bundle of categories. 
(Formulating the flatness condition requires some care, in light of the fiber-wise 
automorphisms.)

\item Exploiting the contractibility of the group $P_1G$ of paths starting at $1\in G$, we can 
ask for a trivialization of the lifted $P_1G$-action. 

Now, the action of the based loop group $\Omega G$ (kernel of $P_1G\to G$) is already trivial 
(being factored through $1\in G$), and the difference of trivializations defines a (topological) 
representation of $\Omega G$ by automorphisms of the identity functor in $Z(+)$. 

The group $\Omega G$ has an $E_2$ structure,  seen from its equivalence with the second 
loop space $\Omega^2BG$; and the representation on $\mathrm{Id}_{Z(+)}$ is the $2$-holonomy, 
over spheres, of the flat connection in \#2. Importantly, it is an $E_2$ representation.
\end{enumerate}

\begin{remark}
When $G$ is connected, description \#3 above captures all the information for the action 
(up to contractible choices), because the space of trivializations of a trivial topological 
action of $P_1G$ is contractible.
\end{remark}

\begin{example}\label{serre}
A topological action of the circle on a category is given by a group homomorphism 
from $\mathbb{Z}=\pi_1S^1 = \pi_0\Omega S^1$ to the automorphisms of the identity: 
equivalently, a central (in the category) automorphism of each object. Because there is no 
higher topology in $S^1$, this also works when the target is a $2$-category, such as
the (sub-groupoid of fully dualizable objects in the) $2$-category $\mathfrak{LCat}$, 
the structural $\mathrm{SO}(2)\subset\mathrm{O}(2)$ action gives an automorphism 
of each category: this is the Serre functor. 
\end{example}

\begin{example}\label{cuspidalreps}
Endomorphisms of the identity in the linear category $\mathfrak{Vect}$ are the complex scalars, 
so that linear topological representations of a connected $G$ on $\mathfrak{Vect}$ are 
$1$-dimensional representations of $\pi_0\Omega G\cong \pi_1G$. These are the points in 
the center of the complexified Langlands dual group $G^\vee_{\mathbb{C}}$.
\end{example}

Recall that the endomorphisms of the identity in a category (the center) form the 
$0^{\mathrm{th}}$ Hochschild cohomology. To generalize the above example to the 
derived world, we should include the entire Hochschild cochain complex.  

\begin{theorem}\label{e2action}
Topological actions of a connected group $G$ on a linear dg-category $\mathfrak{C}$ are captured 
(up to contractible choices) by the induced $E_2$ algebra homomorphism from the chains $C_*\Omega G$, 
with Pontrjagin product, to the Hochschild cochains of $\mathfrak{C}$. \qed
\end{theorem}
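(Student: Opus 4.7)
The plan is to identify topological $G$-actions on $\mathfrak{C}$ with pointed maps $BG\to B\mathrm{Aut}(\mathfrak{C})$, where $\mathrm{Aut}(\mathfrak{C})$ denotes the $\infty$-group of auto-equivalences of $\mathfrak{C}$, and then exploit the simple-connectedness of $BG$ (a consequence of $\pi_0 G = 0$) to reduce the datum to infinitesimal information at the base point. Applying the loop-space functor $\Omega$ once to a classifying map $BG\to B\mathrm{Aut}(\mathfrak{C})$ recovers the underlying $E_1$-homomorphism $G\simeq \Omega BG\to \mathrm{Aut}(\mathfrak{C})$, which is just the homotopy-coherent action envisaged in description \#2 before the theorem; applying $\Omega$ once more, and using that $\Omega G\simeq \Omega^2 BG$ acquires a canonical $E_2$-structure, produces an $E_2$-algebra map
\[
\Omega G \longrightarrow \Omega\mathrm{Aut}(\mathfrak{C})\simeq \mathrm{End}(\mathrm{Id}_{\mathfrak{C}})^{\times}
\]
into the units of the derived center of $\mathfrak{C}$. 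Taking singular chains and invoking the Deligne conjecture, which endows Hochschild cochains with their $E_2$-structure, converts this into the desired $E_2$-algebra map $C_*\Omega G \to HH^*(\mathfrak{C})$.

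For the reverse direction I would reconstruct the $G$-action from the $E_2$-datum by double-delooping. Because $G$ is connected, $BG$ is simply connected, and the classical bar--cobar duality, implemented in the linear $\infty$-categorical setting, expresses the mapping space $\mathrm{Map}_*(BG, B\mathrm{Aut}(\mathfrak{C}))$ as the space of $E_2$-algebra maps out of $C_*\Omega G = C_*\Omega^2 BG$ into the ``center'' of the target, namely the Hochschild cochain $E_2$-algebra, via the identification already used in the forward direction. The simple-connectivity of $BG$ makes the bar--cobar comparison an equivalence, not merely a formal-neighbourhood statement, and accounts for the ``contractible choices'' hedge in the theorem.

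The main obstacle will be making the two $E_2$-structures agree coherently: the topological $E_2$-structure on $C_*\Omega G$ arising from the little-discs action on $\Omega^2 BG$, and the Deligne-type $E_2$-structure on Hochschild cochains arising from operadic braiding of cochain operations, together with their joint functoriality along the homomorphism being built. A rigorous implementation requires a model --- Lurie's $\infty$-operadic formalism being the most convenient --- in which $\mathrm{Aut}$ of a linear dg category is presented as an $E_1$-monoidal $\infty$-groupoid with the expected loop space, and in which the bar--cobar passage is an equivalence of mapping-space-valued functors over a simply connected base. The connectedness hypothesis on $G$ is used essentially to suppress the $\pi_0 G$-action on the infinitesimal data; extending the statement to disconnected $G$ would require grafting that action back onto the $E_2$-datum by hand.
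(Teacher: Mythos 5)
The first thing to say is that the paper offers no proof of Theorem~\ref{e2action}: the statement carries an immediate \verb|\qed|, and the surrounding text explicitly presents it as the formulation the author will ``provisionally settle for,'' justified only by the informal list of alternative definitions \#1--\#3 that precedes it. Your forward direction is, in classifying-space language, exactly that discussion: trivializing over the contractible path group $P_1G$ (equivalently, looping a pointed map $BG\to B\mathrm{Aut}(\mathfrak{C})$ twice) leaves a representation of $\Omega G\simeq\Omega^2BG$ on $\mathrm{Id}_{\mathfrak{C}}$, i.e.\ an $E_2$-map into the units $GL_1(HH^*(\mathfrak{C}))=\mathrm{End}(\mathrm{Id}_{\mathfrak{C}})^\times$, which one then linearizes via the adjunction between chains and $GL_1$ and the Deligne-conjecture $E_2$-structure on Hochschild cochains. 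So far you are faithfully reconstructing the paper's (unwritten) argument.

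The gap is in your converse, and it is precisely the step that would promote the paper's discussion to a theorem. You are asking bar--cobar duality to do two different jobs and it only does one of them. Double delooping, $\mathrm{Map}_*(BG,B\mathrm{Aut}(\mathfrak{C}))\simeq\mathrm{Map}_{E_2}(\Omega G,GL_1(HH^*(\mathfrak{C})))$ for connected $G$, is fine. But the theorem asserts an equivalence with the space of $E_2$-\emph{algebra} maps $C_*\Omega G\to HH^*(\mathfrak{C})$, and the comparison map from space-level $E_2$-maps into $GL_1$ to chain-level $E_2$-algebra maps is a \emph{linearization}, not an instance of bar--cobar duality; simple connectivity of $BG$ says nothing about whether it is an equivalence. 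Concretely, one must show that $C_*(-)$ is fully faithful on the relevant mapping spaces --- that every $E_2$-algebra map out of $C_*\Omega G$ (e.g.\ out of $H_*(\Omega SU(2))\simeq\mathbb{C}[x]$, a free $E_2$-algebra datum) exponentiates, uniquely up to contractible choice, to a map of group-like $E_2$-spaces into $GL_1(HH^*(\mathfrak{C}))$. Over $\mathbb{C}$ this is a rational-homotopy/completeness statement about $GL_1$ of the Hochschild cochains, and it is exactly the kind of place where ``topological'' and ``homologically linearized'' actions could diverge (the paper itself warns repeatedly that homological algebra near the augmentation loses information). If you want a complete proof you must either prove this full faithfulness under explicit hypotheses on $\mathfrak{C}$, or concede --- as the paper implicitly does --- that the theorem is a definitional choice of what ``topological action'' will mean in the linear setting.
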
 

\begin{example}
From a continuous action of $G$ on a space $X$, we get a locally trivial action on the 
cochains $C^*X$. Indeed, we get an action of $\Omega G$ on the free loop space $LX$ of $X$. 
The action is fiber-wise with respect to the bundle $\Omega X \to LX \to X$. Let $C^*\big(
X;C_*\tilde{\Omega}X\big)$ be the cochain complex on $X$ with coefficients in the fiber-wise 
chains for this bundle. With the fiber-wise Pontrjagin product, this is a model for the 
Hochschild cochains of the algebra $C^*(X)$, and the action of $\Omega G$ exhibits the 
$E_2$ homomorphism in the theorem. 
\end{example}

\begin{remark}
The ``$E_2$'' in the statement is not jus a commutativity constraint, but can contain 
(infinite amounts of!) data; see Lesson \ref{lessons}.5.
\end{remark}

\begin{remark}
One floor below, for $1$-dimensional field theories, the category $Z(+)$ is replaced with 
a vector space (or a complex), and we recognize \#2 above as defining a topological 
representation of $G$. The datum in Theorem~\ref{e2action}  is replaced by an ($E_1)$ 
algebra homomorphism from the chains $C_*G$, with Pontrjagin product, to 
$\mathrm{End}(Z(+))$; there is no connectivity assumption. Climbing to the higher ground 
of $n$-categories, we can extract an $E_{n+1}$-algebra homomorphism from $C_*\Omega^nG$ to the $E_n$ 
Hochschild cohomology; but this misses the information from the homotopy of $G$ below $n$.  
\end{remark}

The following key example captures the relevance of my story to real mathematics. (In fact, 
it contains \emph{all} examples I know for topological group actions!) 

\begin{conjecture}\label{omegaG}
Let $G$ act in Hamiltonian fashion action on a symplectic manifold $X$. Then, $G$ acts 
topologically on the Fukaya category of $X$.
\end{conjecture}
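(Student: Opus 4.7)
By Theorem~\ref{e2action}, it suffices to produce an $E_2$-algebra homomorphism from $C_*\Omega G$ (chains under Pontrjagin product) to the Hochschild cochain complex $HH^*(\mathfrak{F}(X))$ of the Fukaya category. The plan is to construct this map geometrically, by extending Seidel's homomorphism to the chain level and packaging it through parametrized open-closed maps.

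First, I would turn the Hamiltonian action into a map of classifying spaces $BG\to B\mathrm{Ham}(X,\omega)$, equivalently a homomorphism $\Omega G\to \Omega\mathrm{Ham}(X)$ of topological groups, using the identification $\Omega G = \Omega^2 BG$. The main content is therefore to produce an $E_2$ map $C_*\Omega\mathrm{Ham}(X)\to HH^*(\mathfrak{F}(X))$ that is natural with respect to such group homomorphisms. A loop $\gamma\in \Omega\mathrm{Ham}(X)$ determines a Hamiltonian fibration $X\to E_\gamma\to S^2$; counting $J$-holomorphic sections of $E_\gamma$ produces Seidel's class in the units of $QH^*(X)=HH^*(\mathfrak{F}(X))$. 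For a family of loops parametrized by a simplex, one varies the almost-complex structure and the section moduli to obtain chain-level operations; taken together these should assemble into a chain map out of $C_*\Omega\mathrm{Ham}(X)$.

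Next I would verify compatibility with the operadic structures. The Pontrjagin product corresponds to concatenation of Hamiltonian fibrations across the equator, giving an $E_1$ (algebra-map) property via a standard gluing/cobordism argument for the moduli of sections. To upgrade to an $E_2$ structure, one uses the reinterpretation $\Omega\mathrm{Ham}(X)=\Omega^2 B\mathrm{Ham}(X)$: a second loop parameter corresponds to varying the complex structure on $S^2$, and the induced braiding on section moduli should match the Deligne-type $E_2$ structure on Hochschild cochains obtained from the little-disks operad acting on $\mathfrak{F}(X)$ via open-closed maps. Naturality of the whole construction in group homomorphisms then immediately transfers the $E_2$ map back along $\Omega G\to \Omega\mathrm{Ham}(X)$.

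The main obstacle is analytic and operadic rather than conceptual. On the analytic side, one needs a fully functorial parametrized Gromov-Witten/Floer package producing moduli of sections of arbitrary Hamiltonian fibrations with coherent gluing, perturbation, and orientation data; the paper itself flags the state of Floer-Fukaya foundations as the reason this remains a conjecture. On the operadic side, verifying genuine $E_2$-compatibility (as opposed to the weaker $E_1$-statement that reproduces Seidel's theorem and the stronger commutativity of $QH^*$) requires tracking the framed little-disks action through disk bubbling and section degeneration, and matching it term-by-term with the cyclic/open-closed structure on $HH^*(\mathfrak{F}(X))$. The special case where $X$ is a cotangent bundle $T^*M$ with $G$ acting through $M$ should be tractable by reduction to string topology, and provides a useful sanity check before attacking the general Hamiltonian setting.
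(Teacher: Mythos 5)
Your outline is a legitimate route to the statement, but it is not the one the paper takes. The paper's argument is categorical rather than analytic: the Hamiltonian action is encoded as an action of the group object $T^*G$ on $X$ in the category of symplectic manifolds and Lagrangian correspondences (the moment map supplying the correspondence $\{(g,\mu(gx),x,gx)\}\subset T^*G\times(-X)\times X$), which makes $\mathfrak{F}(X)$ a module category over the wrapped Fukaya category $\mathfrak{WF}(T^*G)$; Abouzaid's theorem then identifies $\mathfrak{WF}(T^*G)$ with $C_*\Omega G$-modules, the monoidal structure being matched with the $E_2$ Pontrjagin product by testing on cotangent fibers, and this module structure is exactly the datum of Theorem~\ref{e2action}. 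What that buys is that all the operadic coherence you propose to check by hand (gluing of section moduli, the braiding, compatibility with the Deligne $E_2$ structure on Hochschild cochains) is absorbed into two black boxes: functoriality of Fukaya categories under Lagrangian correspondences, and Abouzaid's theorem. Your route, by contrast, is a chain-level enhancement of the Seidel representation via parametrized counts of sections of Hamiltonian fibrations over $S^2$; it has the virtue of directly generalizing Seidel's theorem (which the paper explicitly says this conjecture does) and of not presupposing functoriality under correspondences, but it pays for this by requiring a fully coherent parametrized open-closed package and, implicitly, the identification $QH^*(X)\cong HH^*(\mathfrak{F}(X))$, which itself needs a generation hypothesis. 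The paper in fact acknowledges the existence of an alternative argument ``more tightly connected to holomorphic disks,'' close in spirit to yours, and notes that it leans harder on the internal construction of the Fukaya category. Given that the statement is labelled a conjecture precisely because such foundational inputs are unsettled, your proposal is an acceptable alternative sketch, with its gaps honestly flagged, but you should be aware that the intended mechanism is the $T^*G$-module structure plus Abouzaid, not a chain-level Seidel map.
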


\begin{proof}
A Hamiltonian action of $G$ on $X$ defines, in the category of symplectic manifolds 
and Lagrangian correspondences, an action of the group object $T^*G$.\footnote{The moment 
map $\mu:X\to \mathfrak{g}^*$ appears in the requisite Lagrangian, 
$\{(g,\mu(gx)_, x, gx)\} \subset T^*G\times (-X)\times X$.} This makes the 
Fukaya category of $X$ into a module category over the wrapped Fukaya category 
$\mathfrak{WF}(T^*G)$. A theorem of Abouzaid \cite{ab} identifies the latter with 
that of $C_*\Omega G$-modules. The tensor structure is identified with the $E_2$ 
structure of the Pontrjagin product, by detecting it on generators of the category 
(the cotangent fibers). 
The resulting structure is equivalent to the datum in Theorem~\ref{e2action}.
\end{proof}

\begin{remark}
It may seem strange to state a conjecture and then provide a proof. However, the reader 
will detect certain assumptions which have not been clearly stated in the conjecture: 
mainly, functoriality of Fukaya categories under Lagrangian correspondences. If $X$
is non-compact, equivariance of the wrapping condition at $\infty$ is essential; 
the statement fails for the \emph{infinitesimally wrapped} Fukaya category of Nadler 
and Zaslow \cite{nz}, see below. (Another outline argument is more tightly connected to 
holomorphic disks and $G_{\mathbb{C}}$-bundles, but that relies on details of 
the construction of the Fukaya category.) 
\end{remark}
  
\begin{remark}
A closely related notion to the one discussed, but distinct from it, is that of an \emph
{infinitesimally trivialized} Lie group action. Here, we ask for the action to be 
differentiable, and the restricted action to the formal group $\hat{G}$ (equivalently, 
the Lie algebra $\mathfrak{g}$) should be homologically trivialized. An example is 
furnished by an action of $G$ on a manifold $X$ and the induced action on the algebra 
$\mathcal{D}(X)$ of differential operators: the Lie action of $\mathfrak{g}$ is 
trivialized in the sense that it is inner, realized by the natural Lie homomorphism 
from $\mathfrak{g}$ to the $1$st order differential operators. Theorem~\ref{e2action} 
does \emph{not} usually apply to such situations. With respect to the alternative definition
\#2 above, the relevant distinction is between \emph{flat} and \emph{integrable} 
connections over $BG$. 
\end{remark}

\subsection{Gauging a topological theory}   
Given a guantum field theory and a (compact Lie) group $G$, physicists normally produce 
a $G$-gauged theory in two stages. The theory is first coupled to a `classical gauge 
background', a principal $G$-bundle. (No connection is needed in the case of topological 
actions.\footnote{Flat connections would be needed when $G$ action does not factor through 
topology, as in $B$-model theories.}) Then, we `integrate over all principal bundles' 
to quantize the gauge theory.

These two distinct stages are neatly spelt out in the setting of extended TQFTs. 
Lurie's theory already captures the first stage of gauging. Namely, 
we convert the principal $G$-bundle into a tangential structure by choosing the 
trivial homomorphism $G\to \mathrm{O}(2)$. (Of course, we may add any desired 
tangential structure, such as orientability, by switching to $G\times \mathrm{SO}(2)\to 
\mathrm{O}(2)$, by projection.) Making $Z(+)$ into a fixed point for the trivial 
$G$-action means defining a (topological) $G$-action on $Z(+)$. This is the input 
datum for a classically gauged theory.

Quantizing the gauge theory, or integrating over principal $G$-bundles, is tricky. 
It is straightforward for finite groups: integration of numbers is a weighted sum, 
and integration of vector spaces and categories is a finite limit or colimit. 
(The duality constraints require the limits and colimts to agree; working in 
characteristic $0$ ensures that \cite{fhlt}.) For Lie groups $G$, integration of 
the numbers requires a fundamental class on the moduli of principal bundles. 
For instance, the symplectic volume form is relevant to topological Yang-mills theory. 
A limited $K$-theoretic fundamental class was defined in \cite{tw}, and cohomological 
classes, such as the one relevant to topological Yang-Mills theory, can be extracted 
from it. But this matter seems worthy of more subtle discussion than space allows here. 

In fact, the gauge theory \emph{cannot} always be fully quantized. The generating 
object for the quantum gauge theory is the invariant category $Z(+)^G$, 
which agrees with the co-invariant category $Z(+)_G$ under mild assumptions. 
In the framework of Theorem~\ref{e2action}, we compute the generator $Z(+)_G$ as 
a tensor product
\begin{equation}\label{gaugedtheory}
Z(+)_G = Z(+) \otimes _{C_*\Omega G} \mathfrak{Vect}
\end{equation} 
with the trivial representation. The $1$-dimensional part of the field theory, 
and sometimes part of the surface operations, are well-defined; but the complete 
surface-level operations often fail to be defined. Thus, for the trivial $2D$ theory, $Z(+)=
\mathrm{dg-}\mathfrak{Vect}$ with trivial $G$-action, and the fixed-points are 
local systems over $BG$. This generates a partially defined $2D$ theory, a 
version of string topology for the space $BG$. The space associated to the 
circle is the equivariant cohomology $H^*_G(G)$ for the conjugation action, 
and the theory is defined the subcategory of $Bord_2$ where all surfaces 
(top morphisms) have non-empty output boundaries for each component. 

This example can be made more interesting by noting that the trivial action of $G$ 
on $\mathrm{dg-}\mathfrak{Vect}$ has interesting topological deformations, in the 
$\mathbb{Z}/2$-graded world; the notable one comes from the quadratic Casimir in 
$H^4(BG)$, and gives topological Yang-Mills theory with gauge group $G$. 
When $G$ is semi-simple, this theory is almost completely defined, and the invariants of 
a closed surface (of genus $2$ or more) are the symplectic volumes of the moduli spaces of 
flat connections. (Further deformations exits, by the entire even cohomology of $BG$ and 
relate to more general integrals over those spaces.) These should be regarded as 
twisted Gromov-Witten theories with target space $BG$. A starting point of the 
present work was the abject failure of the homological calculation~\eqref{gaugedtheory} 
in these examples: for topological Yang-Mills theory, \eqref{gaugedtheory} gives 
the zero answer when $G$ is simple.

\subsection{The space of states} Independently of good behavior of the fixed-point 
category $Z(+)^G$, the space(s) of states of the gauged theory are well-defined. More 
precisely, each $g\in G$ gives an autofunctor $g_*$ of the category. The Hochshild 
cochain complexes $HCH^*(g_*;Z(+))$ assemble to a (derived) local system $\mathcal{H}(Z(+))$ 
over the group $G$, which is equivariant for the conjugation action, and the space of states 
for the (blackboard framed) circle in the gauge theory is the equivariant homology 
$H^G_*(G;\mathcal{H})$. It has a natural $E_2$ multiplication, using the Pontrjagin product 
in the group. When $Z(+)=\mathfrak{Vect}$, with the trivial $G$-action, we recover the 
string topology space $H^G_*(G)$ of $BG$ by exploiting Poincar\'e duality on $G$.\footnote
{The last space goes with the radially framed circle.}   

\section{The $2$-category of Kapustin-Rozansky-Saulina }
As the image of the point, an object in the $3$-dimensional bordism $3$-category, Lurie's 
generator for pure $3$-dimensional gauge theory should have categorical depth $2$.
My proposal for this generator is a $2$-category associated to a certain holomorphic 
symplectic manifold, to be described in \S\ref{bfmsect}. 

Fortunately, the existence of the requisite $2$-category has already been conjectured, and a 
proposal for its construction has been outlined in  \cite{krs, kr}. When 
$X$ is compact, this $2$-category should generate the Rozansky-Witten 
theory \cite{rozwit} of $X$. In particular, its Hochschild cohomology, which on general grounds 
is a $1$-category with a braided tensor structure, should be (a dg refinement of) the derived 
category of coherent sheaves on $X$ described in \cite{robwil}. Just like Rozansky-Witten theory, 
the narrative takes place 
in a differential graded world, and in applications, the integer grading must be collapsed$\mod 2$
(the symplectic form needs to have degree $2$, if the integral grading is to be kept). To 
keep the language simple, I will use `sheaf' for `complex of sheaves' and write $\mathfrak{Coh}$
for a differential graded version of the category of coherent sheaves, etc.  

\begin{remark}
The $2$-category may at first  appear analogous to the deformation quantization of the 
symplectic manifold; but that is not so. That analogue --- a double categorification --- is 
$\mathfrak{Coh}(X)$ with its braided tensor structure. The category \cite{krs} is a 
`square root' of that, and I will denote it $\sqrt{\mathfrak{Coh}}(X)$ or $KRS(X)$.  
\end{remark}

\subsection{Simplified description} The following partial description of the $KRS$ 
$2$-category applies to a Stein manifold $X$, when deformations  coming from coherent cohomology 
vanish.\footnote{My discussion is faulty in another way, failing to incorporate the Spin structures, 
which must be carried by the Lagrangians. I am grateful to D.\ Joyce for flagging their role.}  
In our example, $X$ will be affine algebraic.
Among objects of $\sqrt{\mathfrak{Coh}}(X)$ are smooth holomorphic Lagrangians $L\subset X$; 
more general objects are coherent sheaves of $\mathcal{O}_L$-linear categories on such $L$. 
(The object $L$ itself stands for its dg category $\mathfrak{Coh}(L)$ of coherent sheaves, 
a generator for the above.) To make this even more precisee, $\sqrt{\mathfrak{Coh}}(X)$ is the sheaf of global 
sections of a coherent sheaf of $\mathcal{O}_X$-linear $2$-categories, whose localization at 
any smooth $L$ as above is equivalent the $2$-category of module categories over the sheaf of 
tensor categories $(\mathfrak{Coh}(L),\otimes)$ on $L$; with a bit of faith, this  
pins down $\sqrt{\mathfrak{Coh}}(X)$, as follows.

For two Lagrangians $L,L'\in X$, $\Hom(L,L')$ will be a sheaf of categories supported on $L\cap L'$, 
and a $(\mathfrak{Coh}(L),\otimes)-(\mathfrak{Coh}(L'),\otimes)$ bi-module. Localizing at $L$, 
we choose a (formal) neighborhood identified symplectically with $T^*L$, so that we regard 
(locally) $L'$ as the graph of a differential $d\Psi$, for a \emph{potential} function $\Psi:
L\to \mathbb{C}$. Locally where this identification is valid, $\Hom(L,L')$ becomes equivalent 
to the \emph{matrix factorization} category $MF(L, \Psi)$. (See for instance \cite{orl}.)  

\subsection{Lessons}\label{lessons} Several insights emerge from this important notion.
\begin{enumerate}
\item A familiar actor in mirror symmetry, a complex manifold $L$ with potential $\Psi$,  
is really the object in $\sqrt{\mathfrak{Coh}}(T^*L)$ represented by the graph $\Gamma(d\Psi)$, 
masquerading as a more traditional geometric object. 
The matrix factorization category $MF(L,\Psi)$ is its $\Hom$ with the zero-section.
This resolves the contradiction in which the restriction of the category $MF(L,\Psi)$ to 
a sub-manifold $M\subset L$ is commonly taken to be the matrix factorization category of $\Psi|_M$. 
That is clearly false in the $2$-category of $(\mathfrak{Coh}(L),\otimes)$-module categories
(the result of localizing to the zero-section $L\subset T^*L$). For instance, if the critical 
locus of $\Psi$ does not meet $M$, $\Hom$ computed in $(\mathfrak{Coh}(L),\otimes)$-modules 
gives zero. Instead, $M$ must be replaced by the object represented by its co-normal bundle 
in $\sqrt{\mathfrak{Coh}}(T^*L)$, whose $\Hom$ there with $\Gamma(d\Psi)$ computes precisely $MF(M,\Psi|_M)$. 
 
\item The well-defined assignment sends $(\mathfrak{Coh}(L),\otimes)$-module categories 
to sheaves of categories with Lagrangian support in the cotangent bundle $\hat{T}^*L$, completed 
at the zero-section. Namely, the Hochschild cohomology of such a category $\mathfrak{K}$ is 
(locally on $L$) an $E_2$-algebra over the second ($E_2$) Hochschild cohomology of 
$(\mathfrak{Coh}(L),\otimes)$, which is an $E_3$ algebra. The spectrum of the latter is 
$\hat{T}^*L$, with $E_3$ structure given by the standard symplectic form. This turns 
$\mathrm{Spec}\,HH^*(\mathfrak{K})$ into a coherent sheaf with co-isotropic support in 
$\hat{T}^*L$, and $\mathfrak{K}$ sheafifies over it. The \emph{Lagrangian} condition is 
clearly related to a finiteness constraint, but this certainly shows the need to include 
singular Lagrangians in the $KRS$ $2$-category. 

\item The deformation of a $(\mathfrak{Coh}(L),\otimes)$-module category $\mathfrak{M}$ 
by the addition of a potential (`curving') $\Psi\in \mathcal{O}(L)$ shifts the support 
of $\mathfrak{M}$ vertically by $d\Psi$ in $T^*L$. This allows one to move from formal to analytic 
neighborhoods of $L$, if the deformation theory under curvings is well-understood.
For instance, one can compute the $\Hom$ between two objects that do not intersect 
the zero-section --- such as two potentials without critical points --- by drawing 
their intersection into $L$: $\Hom(\Gamma(d\Phi),\Gamma(d\Psi)) = MF(L,\Psi-\Phi)$. 

\item More generally, Hamiltonian vector fields on $\hat{T}^*L$ give the derivations 
of $\sqrt{\mathfrak{Coh}}(\hat{T}^*L)$ defined from its $E_2$ Hochschild cohomology. Hamiltonians vanishing 
on the zero-section preserve the latter, and give first-order automorphisms of 
$(\mathfrak{Coh}(L),\otimes)$. 

\item The $KRS$ picture captures in geometric terms sophisticated algebraic information. 
For example, the category $\mathfrak{Vect}$ can be given a $(\mathfrak{Coh}(L),\otimes)$-module 
structure in many more ways in the $\mathbb{Z}/2$ graded world: any potential $\Psi$ with 
a single, Morse critical point will accomplish that. The location of the critical 
point $p\in L$ misses an infinite amount of information, which is captured precisely by 
the graph of $d\Psi$; this is equivalent to an $E_2$ structure on the evaluation 
homomorphism $\mathcal{O}_L\to \mathbb{C}_p$ at the residue field (cf.~Theorem~\ref{e2action}).  
\end{enumerate}    

\noindent Parts of this story can be made rigorous at the level of formal deformation theory, 
see for instance \cite{fran}, and of course the outline in \cite{kr}. Lesson 3 also offers a 
working definition of the $2$-category $\sqrt{\mathfrak{Coh}}(T^*L)$ as 
that of $(\mathfrak{Coh}(L),\otimes)$-modules, together with all their deformations 
by curvings. On a general symplectic manifold $X$, we can hope to patch the local definitions 
from here.\footnote{If $X$ is not Stein, deformations will be imposed upon this story  
by coherent cohomology.} It is not my purpose to supply a construction of $\sqrt{\mathfrak{Coh}}(X)$ 
here --- indeed, that is an important open question --- but rather, to indicate enough structure 
to explain my answer to the mirror of (non-abelian) gauge theory. I believe that one important 
reason why that particular question has been troublesome is that the mirror holomorphic symplectic 
manifold, the $BFM$ space of \S\ref{bfmsect}, \emph{not} quite a cotangent bundle, so the usual 
description in terms of complex manifolds with potentials is inadequate.

\begin{remark}
If $X=T^*L$ for a manifold $L$, and we insist on integer, rather than $\mathbb{Z}/2$-gradings, 
then the cotangent fibers have degree $2$ and all structure in the $KRS$ category is invariant 
under the scaling action on $T^*L$. In that case, we are dealing precisely with 
$(\mathfrak{Coh}(L),\otimes)$-modules.  
\end{remark}

\subsection{Boundary conditions and domain walls} \label{boundarycond}
The $\Hom$ category $\Hom(L,L')$ for two Lagrangians $L,L'\subset X$ with finite intersection supplies 
a $2$-dimensional topological field theory for framed surfaces; this follows form its local
description by  matrix factorizations. Since $X$ itself aims to 
define a $3D$ (Rozansky-Witten) theory and each of $L,L'$ is a boundary condition for it, 
one should picture a sandwich of Rozansky-Witten filling between a bottom slice of $L$ and a 
top one of $L'$. The formal description is that $L,L': \mathrm{Id}\to RW_X$ are morphisms from 
the trivial $3D$ theory $\mathrm{Id}$ to Rozansky-Witten theory $RW_X$, viewed as functors 
from $Bord_2$ to the $3$-category of linear $2$-categories, and the category $\Hom(L,L')$ of 
natural transformations between these morphisms is the generator for this sandwich theory. 
Geometrically, it is represented by the interval, with $RW_X$ in the bulk and $L,L'$ at the ends, 
and is also known as the \emph{compactification} of $RW_X$ along the interval, with the named boundary conditions.   

Factoring this theory through \emph{oriented} surfaces requires a trace on the Hochschild 
homology $HH_*$ (cf.~\S\ref{tqftbasics}). Now, the canonical description of the only non-zero 
group, $HH_{\dim L}$,  turns out to involve the Spin square roots\footnote{The \emph{cohomology} 
is easy to pin down canonically, as the functions on $L\cap L'$.} of the 
canonical bundles $\omega, \omega'$  of $L,L'$ on their scheme-theoretic overlap:
\begin{equation}\label{ext}
HH_{\dim L} \Hom(L,L') \cong \Gamma\big(L\cap L'; (\omega\otimes \omega')^{1/2}\big).
\end{equation}
A non-degenerate quadratic form on $HH_{\dim L}$ comes from the Grothendieck residue 
(and the symplectic volume on $X$). A non-degenerate trace on $HH_*$ will thus be defined 
by choosing non-vanishing sections of $\omega^{1/2}, \omega'^{1/2}$ on $L,L'$.

\begin{remark}
A generalization of the notion of boundary condition is that of a \emph{domain wall} 
between TQFTs. This is an adjoint pair of functors between the TQFTs meeting certain 
(dualizability) conditions, see \cite{lur},~\S4. A boundary condition is a domain wall with 
the trivial TQFT. Just as a holomorphic Lagrangian in $X$ can be expected to define a boundary
condition for $RW_X$, a holomorphic Lagrangian correspondence $X \leftarrow C \rightarrow Y$ 
should define a domain wall between $RW_X$ and $RW_Y$. We shalll use these in \S5 and \S6, in  
comparing gauge theories for different groups.
\end{remark}

\section{The mirror of abelian gauge theory}
This interlude recalls the mirror story of torus gauge theory; except for the difficulty 
mentioned in Lesson~1 of \S\ref{lessons}, this story is well understood and can be phrased
as a categorified  Fourier-Mukai transform. In fact, in this case 
we can indicate the other mirror transformation, from the gauged $B$-model to a family 
of $A$-models.

\subsection{The $\mathbb{Z}$-graded story} We will need to correct this when abandoning 
$\mathbb{Z}$-gradings, in light of the wisdom of the previous section; nevertheless 
the following picture is nearly right.

\begin{proposition}\label{torusreps}
(i) Topological actions of the torus $T$ on the category $\mathfrak{Vect}$ are classified by points 
in the complexified dual torus $T^\vee_{\mathbb{C}}$.\\
(ii) A topological action of $T$ on a linear category $\mathfrak{C}$ is equivalent to a 
quasi-coherent sheafification of $\mathfrak{C}$ over $T^\vee_{\mathbb{C}}$.
\end{proposition}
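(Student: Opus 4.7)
The plan is to deduce both statements from Theorem~\ref{e2action} by a direct computation of $C_*\Omega T$ together with its $E_2$ structure. Because $T\cong (S^1)^r$ is an Eilenberg--MacLane space $K(\pi_1 T, 1)$, the based loop group $\Omega T$ is homotopy equivalent to the discrete group $\pi_1 T$; consequently $C_*\Omega T$ is quasi-isomorphic, as an $E_2$-algebra, to the ordinary group algebra $\mathbb{C}[\pi_1 T]$ sitting in homological degree zero, with Pontrjagin product equal to group multiplication. Under the identification $T^\vee_{\mathbb{C}} = \Hom(\pi_1 T, \mathbb{C}^*)$ this group algebra is exactly the algebra $\mathcal{O}(T^\vee_{\mathbb{C}})$ of regular functions on the complexified dual torus. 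Since $\mathcal{O}(T^\vee_{\mathbb{C}})$ is commutative and concentrated in a single degree, the $E_2$ data carries no higher information, and an $E_2$ map into $HH^*(\mathfrak{C})$ lands in the degree-zero part, the classical center $HH^0(\mathfrak{C}) = Z(\mathfrak{C})$. Thus Theorem~\ref{e2action} identifies topological $T$-actions on $\mathfrak{C}$ with ordinary commutative algebra homomorphisms $\mathcal{O}(T^\vee_{\mathbb{C}}) \to Z(\mathfrak{C})$.

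Part (i) is then immediate: for $\mathfrak{C} = \mathfrak{Vect}$ one has $Z(\mathfrak{Vect}) = \mathbb{C}$, and unital ring maps $\mathcal{O}(T^\vee_{\mathbb{C}}) \to \mathbb{C}$ are exactly the closed points of $T^\vee_{\mathbb{C}}$. This recovers Example~\ref{cuspidalreps} in the case $G = T$.

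For part (ii) I would translate the central action of $\mathcal{O}(T^\vee_{\mathbb{C}})$ on $\mathfrak{C}$ into the corresponding sheafification. Concretely, a ring homomorphism $\mathcal{O}(T^\vee_{\mathbb{C}}) \to Z(\mathfrak{C})$ makes $\mathfrak{C}$ into an $\mathcal{O}(T^\vee_{\mathbb{C}})$-linear dg-category, and I would invoke the standard equivalence between module categories over a commutative ring $A$ and quasi-coherent sheaves of dg-categories over $\mathrm{Spec}\,A$: on principal affines $U \subset T^\vee_{\mathbb{C}}$, restriction corresponds to $\mathfrak{C}\otimes_{\mathcal{O}(T^\vee_{\mathbb{C}})}\mathcal{O}(U)$, and one checks Zariski descent for these tensor products. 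Under this correspondence, part (i) reappears as the fibre at a closed point.

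The main obstacle is the sheafification step in (ii): one must verify flatness of localization of $\mathcal{O}(T^\vee_{\mathbb{C}})$-module categories (harmless in the affine case, where localization of modules is flat) and that the resulting presheaf of categories genuinely descends, which amounts to invoking an appropriate affine/monoidal Barr--Beck theorem (e.g.\ Gaitsgory's), with whatever technical hypotheses on $\mathfrak{C}$ (presentability, cocompleteness, etc.) one adopts. A secondary, purely homotopical, subtlety is the identification of $\Omega T$ with the discrete group $\pi_1 T$ as $E_2$-algebras; this is forced by the contractibility of the universal cover $\mathfrak{t}\to T$, which collapses all higher operations onto the obvious ones on $H_0$, but it is exactly this collapse that is missed once one passes to the $\mathbb{Z}/2$-graded setting flagged in Lesson~\ref{lessons}.1 and to which the proposition must eventually be corrected.
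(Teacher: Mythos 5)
Your argument is essentially the paper's own proof: Teleman also deduces both parts from Theorem~\ref{e2action} by identifying $C_*\Omega T$ with the ring of regular functions on $T^\vee_{\mathbb{C}}$ and then sheafifying the category over its center ($HH^0$). You simply spell out the intermediate steps (the equivalence $\Omega T\simeq \pi_1 T$, descent for the localizations) and correctly flag, as the paper does in \S\ref{torusz2}, that the collapse of the $E_2$ data is an artifact of the $\mathbb{Z}$-graded setting.
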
 
\begin{proof}
Both statements follow from Theorem~\ref{e2action}, considering that the group ring $C_*(\Omega T)$ is 
quasi-isomorphic to the ring of algebraic functions on $T^\vee_{\mathbb{C}}$, and that a category 
naturally sheafifies over its center, the zeroth Hochschild cohomology.
\end{proof}

There emerges the following $0^{\mathrm{th}}$ order approximation to abelian gauged mirror symmetry: 
if $X$ is a symplectic manifold with Hamiltonian action of $T$, and $X^\vee$ is a mirror of $X$ --- 
in the sense that $\mathfrak{Coh}(X^\vee)$ is equivalent to the Fukaya category $\mathfrak{F}(X)$ --- 
then the group action on $X$ is mirrored into a holomorphic map $\pi: X^\vee\to T^\vee_{\mathbb{C}}$. 
This picture could be readily extracted from Seidel's result, \cite{seid}.

Proposition~\ref{torusreps} interprets the mirror map $X^\vee\to T^\vee_{\mathbb{C}}$ 
as a \emph{spectral decomposition} of the category $\mathfrak{F}(X)$ into irreducibles 
$\mathfrak{Vect}_\tau$. One of the motivating conjectures of this program gives a geometric 
interpretation of this spectral decomposition, in terms of the original manifold $X$ and 
the moment map $\mu: X\to \mathfrak{t}^*$.

\begin{conjecture}[Torus symplectic quotients]\label{torusgit}
The multiplicity of $\mathfrak{Vect}_\tau$ in $\mathfrak{F}(X)$ is the Fukaya category of the 
symplectic reduction of $X$ at the point $\mathrm{Re}\log\tau\in \mathfrak{t}^*$, with imaginary 
curving ($B$-field) $\mathrm{Im}\log\tau$.
\end{conjecture}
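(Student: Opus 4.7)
The plan is to identify the multiplicity category with a derived tensor product and then compute that tensor product geometrically via a Lagrangian correspondence, in exactly the same spirit as the proof of Conjecture~\ref{omegaG}. By Proposition~\ref{torusreps}, the sheafification of $\mathfrak{F}(X)$ over $T^\vee_{\mathbb{C}}$ has fibre at $\tau$ given by
\[
\mathfrak{F}(X) \otimes_{C_*\Omega T} \mathbb{C}_\tau,
\]
where $\mathbb{C}_\tau$ is the skyscraper $C_*\Omega T$-module at the point $\tau\in\mathrm{Spec}\,C_*\Omega T = T^\vee_{\mathbb{C}}$, and this fibre \emph{is} the multiplicity of $\mathfrak{Vect}_\tau$ by definition. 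The first subtask is to describe $\mathbb{C}_\tau$ geometrically: under Abouzaid's equivalence $C_*\Omega T\text{-mod}\simeq\mathfrak{WF}(T^*T)$ invoked in the proof of Conjecture~\ref{omegaG}, I expect $\mathbb{C}_\tau$ to correspond to the zero-section of $T^*T$ translated vertically by $\mathrm{Re}\log\tau\in\mathfrak{t}^*$ and equipped with the flat unitary line bundle on $T$ of holonomy $\exp(\mathrm{Im}\log\tau)$. This is the standard `character Lagrangian' implicit in Seidel's picture.

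The second step is to compose this Lagrangian with the Hamiltonian correspondence
\[
\Gamma \;=\; \{(t,\mu(x),x,tx)\} \;\subset\; T^*T\times(-X)\times X
\]
that encodes the $\mathfrak{WF}(T^*T)$-module structure on $\mathfrak{F}(X)$. The set-theoretic composition is $\{(x,tx) : \mu(x) = \mathrm{Re}\log\tau,\ t\in T\} \subset (-X)\times X$, i.e.\ the graph of the $T$-action on the level set $\mu^{-1}(\mathrm{Re}\log\tau)$; viewed as a correspondence from a point, its Lagrangian image is precisely the reduction $X/\!/_{\mathrm{Re}\log\tau}T$. The flat unitary structure on the character Lagrangian descends under this composition to a flat line bundle on the reduction whose holonomy is determined by $\mathrm{Im}\log\tau$, supplying the required curving. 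Applying the functoriality of the Fukaya construction under Lagrangian correspondences then identifies the multiplicity with $\mathfrak{F}\bigl(X/\!/_{\mathrm{Re}\log\tau}T\bigr)$ equipped with the stated $B$-field.

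The main obstacle is foundational: rigorous functoriality of Fukaya categories under Lagrangian correspondences, already the hard input into Conjecture~\ref{omegaG}, must here be applied to a non-compact correspondence whose behaviour at infinity is controlled by the $T$-equivariance of the wrapping condition on $X$. A secondary subtlety is to verify the claimed description of $\mathbb{C}_\tau$; the natural candidate is forced by a small calculation in $\mathfrak{WF}(T^*T)$ using that its Hochschild cohomology is $\mathcal{O}(T^\vee_{\mathbb{C}})$, but making the shift and the holonomy explicit requires care. Finally, special attention is needed when $\mathrm{Re}\log\tau$ is a critical value of $\mu$, where the reduction is singular and one must pass to a stacky or derived enhancement, and when $\mathrm{Re}\log\tau\notin\mu(X)$, where both sides should vanish, which on the correspondence side amounts to the emptiness of $L_\tau\cap\Gamma$. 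A reassuring sanity check is $X = T^*T$, where the fibres on both sides are $\mathfrak{Vect}$ for every $\tau$.
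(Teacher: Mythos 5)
The paper offers no proof of this statement: it is one of the motivating open conjectures of the program (the surrounding text and the two remarks that follow it only discuss its scope), so there is no argument of the author's to compare yours against. Judged on its own terms, your outline reproduces the heuristic the paper clearly has in mind --- spectral decomposition over $T^\vee_{\mathbb{C}}$ plus composition of the moment-map correspondence with a ``character Lagrangian'' in $T^*T$ --- but its first step contains a gap that the paper itself is largely organized around warning you about.

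The gap is the identification of the multiplicity of $\mathfrak{Vect}_\tau$ with the derived tensor product $\mathfrak{F}(X)\otimes_{C_*\Omega T}\mathbb{C}_\tau$, i.e.\ with the fibre of the quasi-coherent sheafification of Proposition~\ref{torusreps} at $\tau$. This is exactly the ``faulty restriction to the fiber'' of Lesson~1 in \S\ref{lessons}, made glaring in Example~\ref{toricvar}: there the global category $MF(T^\vee_{\mathbb{C}},\Psi)$ is \emph{null}, so every such fibre vanishes, yet the conjectural answer (the Fukaya category of a toric variety) does not. In the $\mathbb{Z}/2$-graded setting forced on you by Fukaya categories, the homological-algebra fibre at $\tau$ and the geometric multiplicity are computed at different places in $T^*T^\vee_{\mathbb{C}}$ (\S\ref{torusz2}: the former near the zero-section, the latter at the intercept with the cotangent fibre over $\tau$), and they genuinely differ; passing between them requires the $KRS$ enhancement of $\mathfrak{F}(X)$, i.e.\ knowledge of all its curvings, which is strictly more data than the category with its $T$-action. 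Your second step --- geometric composition of $L_\tau$ with the moment correspondence $\Gamma$ --- is actually aimed at the correct (geometric) multiplicity, but then the asserted equality of that composition with the tensor product is false in general, not merely a foundational technicality about Wehrheim--Woodward-style functoriality. To make the proposal coherent you must either drop the tensor-product characterization entirely and define the multiplicity as the $\Hom$ in $\sqrt{\mathfrak{Coh}}(T^*T^\vee_{\mathbb{C}})$ with the cotangent fibre at $\tau$ (using the unitary mirror, per the paper's second remark, so that the support is cut off by the moment polytope), or explain why for the particular object $\mathbb{C}_\tau$ and the particular module $\mathfrak{F}(X)$ the naive fibre happens to agree with the $KRS$ computation --- which is precisely what the paper says fails.
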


\begin{remark}
This is, for now, meaningless over singular values of the moment map, where there seems to be no 
candidate definition for the Fukaya category of the quotient.
\end{remark}

\begin{remark}
The conjecture relies on using the \emph{unitary mirror} of $X$, constructed from Lagrangians with 
unitary local systems. Otherwise, in the toric case, the algebraic mirror $X^\vee$ is $T^\vee_{
\mathbb{C}}$, obviously having a point fiber over every point in $T^\vee_{\mathbb{C}}$; yet the 
symplectic reduction is empty for values outside the moment polytope. 
That polytope is precisely the cut-off prescribed for the mirror by  unitarity. 
\end{remark}

\begin{example}[Toric varieties]\label{toricvar}
The following construction of mirrors for toric manifolds, going back to the work of Givental and 
Hori-Vafa, illustrates both the conjecture and the need to correct the picture by moving to the 
$KRS$ category. 

Start with the mirror of $X=\mathbb{C}^N$, with standard symplectic form, as the space 
$T^\vee_{\mathbb{C}}:=(\mathbb{C}^\times)^N$ with potential $\Psi = z_1+\dots z_N$. 
Here, $T^\vee$ is the dual of the diagonal torus acting on $X$, and the mirror map 
$X^\vee\to  T^\vee_{\mathbb{C}}$ is the identity.\footnote{This is readily obtained from the 
SYZ picture, using coordinate tori as Lagrangians; the unitary mirror is cut off by $|z_k|< 1$.} 
For a sub-torus $i:K\hookrightarrow T$, the mirror of the symplectic reduction 
$X_q:=\mathbb{C}^N/\!/\!_qK$ at $q\in \mathfrak{k}^*$ is the (torus) fiber $X^\vee_q$ of 
the dual surjection $i^\vee: T^\vee_{\mathbb{C}} \twoheadrightarrow K^\vee_{\mathbb{C}}$, 
with restricted super-potential $\Psi$. The parameter $q$ lives in the small quantum cohomology of $X$.
We see here the familiar, but faulty restriction to the fiber of the matrix factorization category 
$MF(T^\vee_{\mathbb{C}},\Psi)$ of Lesson~\ref{lessons},~\#1. The problem is glaring, 
because the original MF category is null. 

The mirror $X^\vee_q$ projects isomorphically to the kernel $S^\vee_{\mathbb{C}}$ of $i^\vee$; 
this is the map $\pi$ mirror to the action of $S = T/K$ on $X$. 
\end{example}

\subsection{Fourier transform}
As can be expected in the abelian case, the spectral decomposition of Proposition~\ref{torusreps} 
is formally given by a Fourier transform. Specifically, there is a `categorical Poincar\'e 
line bundle' 
\[
\mathfrak{P}\to BT_{\mathbb{C}}\times T^\vee_{\mathbb{C}},
\] 
with an integrable flat connection along $BT$. (Of course, $\mathfrak{P}$ is the universal  
one-dimensional topological representation of $T$, and its fiber over $\tau\in T^\vee_{\mathbb{C}}$ 
is $\mathfrak{Vect}_\tau$.) Given a category $\mathfrak{C}$ with topological $T$-action, 
we form the bundle $\Hom(\mathfrak{P},\mathfrak{C})$ and integrate along $BT_{\mathbb{C}}$ 
to obtain the spectral decomposition of $\mathfrak{C}$ laid out over $T^\vee_{\mathbb{C}}$. 

\begin{remark}[$B$ to $A$]
The interest in this observation stems from a related Fourier transformation, giving a 
``$B$ to $A$'' mirror symmetry. There is another Poincar\'e bundle $\mathfrak{Q}\to 
BT_{\mathbb{C}}\times T^\vee_{\mathbb{C}}$, with flat structure this time along $T^\vee_
{\mathbb{C}}$. It may help to exploit flatness and descend to $B(T_{\mathbb{C}}\times 
\pi_1(T)^\vee)$, in which case $\mathfrak{Q}$ is the line $\mathfrak{Vect}$ with action 
of the group $T\times\pi_1(T)^\vee$, defined by the Heisenberg $\mathbb{C}^\times$-central 
extension. (The extension is a multiplicative assignment of a line to every group element, 
and the action on $\mathfrak{Vect}$ tensors by that line.) 

Fourier transform converts a category $\mathfrak{C}$ with (non-topological!) $T$-action into 
a local system $\tilde{\mathfrak{C}}$ of categories over $T^\vee_{\mathbb{C}}$. The fiber of 
$\tilde{\mathfrak{C}}$ over $1$ is the fixed-point category $\mathfrak{C}^T$, and the monodromy 
action of $\pi_1(T^\vee)$ comes from the natural action thereon of the category 
$\mathfrak{Rep}(T)$ of complex $T$-representations. For example, when $\mathfrak{C}
=\mathfrak{Coh}(X)$, the (dg) category of coherent sheaves on a complex manifold with 
holomorphic $T$-action, $\mathfrak{C}^T$ is, almost by definition, the category of sheaves 
on the quotient stack $X/T_{\mathbb{C}}$. The analogue of Conjecture~\ref{torusgit} 
is completely obvious here.

I do not know a non-abelian analogue of this ``$B$ to $A$'' story. 
\end{remark}

\subsection{The $\mathbb{Z}/2$-graded story} \label{torusz2} In light of Lesson~\ref{lessons}.1  
and Example~\ref{toricvar}, the only change needed to reach the true story is to replace the 
$(\mathfrak{Coh}(T^\vee_{\mathbb{C}}),\otimes)$-module category $\mathfrak{Coh}(X^\vee)$, 
determined from  $\pi:X^\vee\to T^\vee_{\mathbb{C}}$, by an object in the $KRS$ category 
of $T^*T^\vee_{\mathbb{C}}$: the category with $T$-action sees precisely the germ 
of a $KRS$ object near the zero-section.

This enhancement of information relies upon knowing not just the Fukaya category $\mathfrak{F}(X)$ 
with its torus action, but all of its curvings with respect to functions lifted from the mirror 
map $\pi:X^\vee\to T^\vee_{\mathbb{C}}$. However, we can expect in examples that a meaningful 
geometric construction of the mirror would carry that information. For instance, in Example~\ref
{toricvar}, we replace $(X_q^\vee, \Psi|_{X^\vee_q})$ and its map to $S^\vee_{\mathbb{C}}$ by 
the graph of $d\Psi|_{X^\vee_q}$ in $T^*S^\vee_{\mathbb{C}}$; this is the result of intersecting 
the graph of $d\Psi$ with the cotangent space at $q\in K^\vee_{\mathbb{C}}$. 

Figure~1 attempts to capture the distinction between 
$(\mathfrak{Coh}T^\vee_{\mathbb{C}},\otimes)$-modules and their $KRS$ enhancement. 
The squiggly line stands for (the support of) a general object; its germ at the zero-section is the 
underlying category, with topological $T$-action. In that sense, the zero-section represents 
the regular representation of $T$ (its $\Hom$ category with any object recovers the underlying category.) 
The invariant category is the intercept with the trivial representation, the cotangent 
space at $1\in T^\vee_{\mathbb{C}}$; other spectral components are intercepts with vertical axes. 
We see that the invariant subcategory is computed `far' from the underlying category, and a homological 
calculation centered at the zero-section will fail.

\begin{figure}
\includegraphics[height=3in]{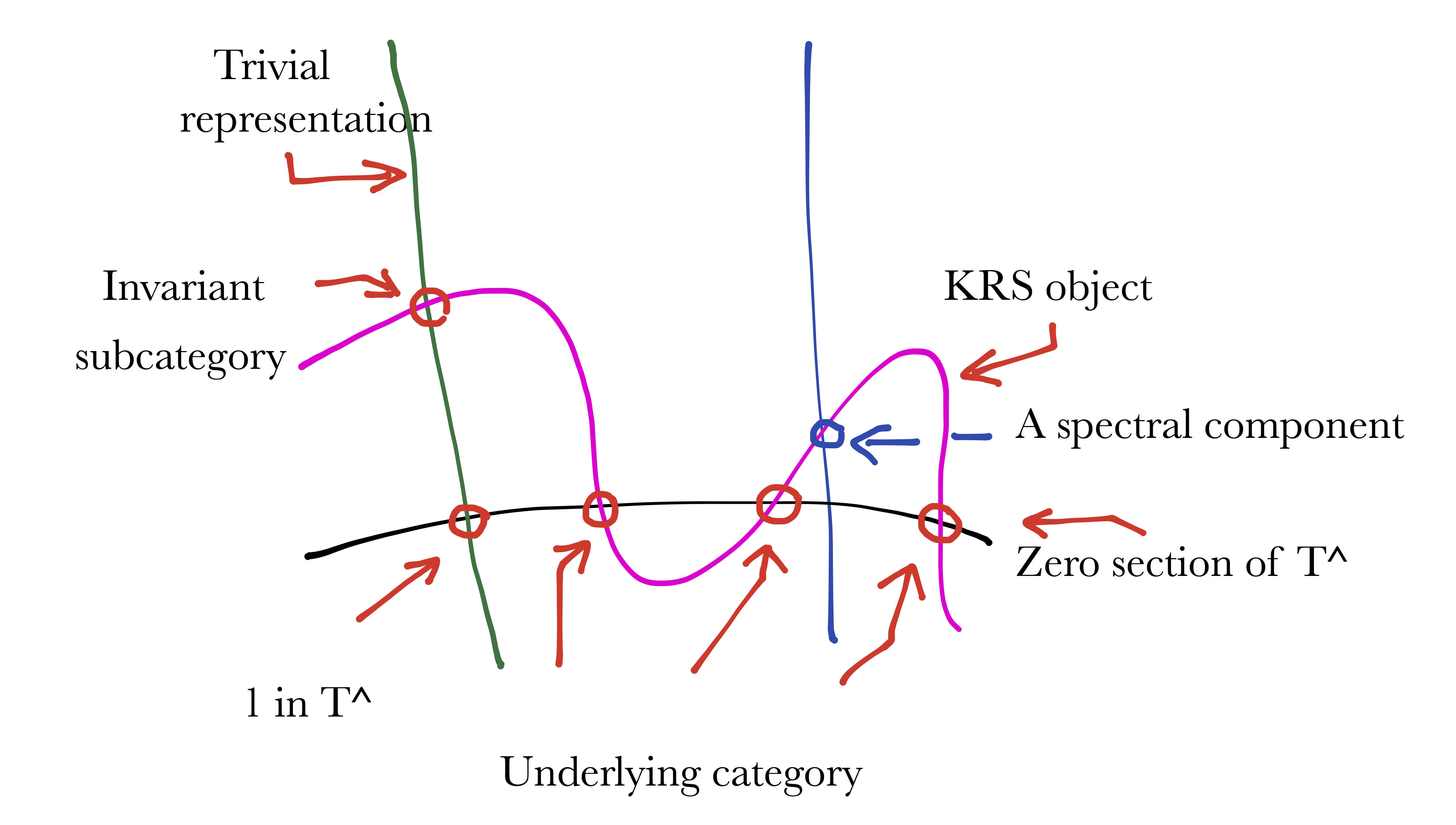}
\caption{Pictorial representation of $\sqrt{\mathfrak{Coh}}(T^*T^\vee_{\mathbb{C}})$}
\end{figure}

\section{The non-abelian mirror $BFM(G^\vee)$}\label{bfmsect}
For torus actions, the insight was that gauging a Fukaya category $\mathfrak{F}(X)$ amounted 
to enriching it from a $\mathfrak{Coh}(T^\vee_{\mathbb{C}})$ module to an object in $\sqrt
{\mathfrak{Coh}}(T^*T^\vee_{\mathbb{C}})$. In a cotangent bundle, this promotion may seem 
modest. A non-abelian Lie group $G$ will move us to a more sophisticated holomorphic 
algebraic manifold which is \emph{not} a cotangent bundle. Let $T$ be a maximal torus of $G$, 
$W$ the Weyl group and $B, B_+$ two opposite (lower and upper triangular) Borel subgroups, 
$N,N_+$ their unipotent radicals; Fraktur letters will stand for the Lie algebras and
$^\vee$ will indicate their counterparts in the Langlands dual Lie group $G^\vee$.

\subsection{The home of $2D$ gauge theory} The space $BFM(G)$ was introduced and studied by 
Bezrukavnikov, Mirkovic and Finkelberg \cite{bfm} in general, but special instances were known 
in many guises. 
Here are several descriptions. Call $T^*_{\mathrm{reg}} G_{\mathbb{C}} \subset T^* G_{\mathbb{C}}$ 
the Zariski-open subset comprising the \emph{regular} cotangent vectors (centralizer of minimal 
dimension, the rank of $G$).
\begin{theorem}
The following describe the same holomorphic symplectic manifold, denoted $BFM(G)$. \\
(i) The spectrum of the complex equivariant homology $H_*^{G^\vee}(\Omega G^\vee)$, with 
Pontrjagin multiplication. \\
(ii) The holomorphic symplectic reduction of $T^*_{\mathrm{reg}}G_{\mathbb{C}}$ 
by conjugation under $G_{\mathbb{C}}$. \\
(iii) The affine resolution of singularities of the quotient $T^*T_{\mathbb{C}}/W$, 
obtained by adjoining the functions $(e^\alpha-1)/\alpha$. ($\alpha$ ranges over 
the roots of $\mathfrak{g}$, $e^\alpha -1$ is the respective function on $T_{\mathbb{C}}$ 
and the denominator $\alpha$ is the linear function on $\mathfrak{t}^*$.)\\
(iv) $BFM\big(\mathrm{SU}_n\big)$ is the moduli space of $\mathrm{SU}_2$ monopoles of charge $n$, 
and is a Zariski-open subset of the Hilbert scheme of $n$ points in $T^*\mathbb{C}^\times$ \cite{athit}.\\
(v) $BFM(T) = T^*T_{\mathbb{C}}$
\end{theorem}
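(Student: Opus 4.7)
The plan is to take (ii) as the working definition and compare each of (i), (iii), (iv), (v) to it. The easiest step is the abelian base case (v), which I would verify first as a consistency check. When $G = T$, every element of $T^*T_{\mathbb{C}}$ is regular, the $T_{\mathbb{C}}$-conjugation action is trivial, and the symplectic reduction of (ii) returns $T^*T_{\mathbb{C}}$ unchanged. For (i), the loop space $\Omega T^\vee$ deformation retracts to its component set $\pi_1(T^\vee) = X^*(T)$, so
\[
H_*^{T^\vee}(\Omega T^\vee;\mathbb{C}) \cong H^*(BT^\vee)\otimes \mathbb{C}[X^*(T)],
\]
whose spectrum is $\mathfrak{t}^\vee_{\mathbb{C}}\times T_{\mathbb{C}} = T^*T_{\mathbb{C}}$; and (iii) involves no roots to adjoin, so it also reduces to $T^*T_{\mathbb{C}}$.

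The main equivalence (i) $\Leftrightarrow$ (ii) is the substantive content of \cite{bfm}. The plan is to invoke the Ginzburg-Peterson isomorphism identifying $H_*^{G^\vee}(\mathrm{Gr}_{G^\vee})$, with its convolution/Pontrjagin product, with the coordinate ring of the universal centralizer scheme $\mathfrak{Z}_G := (T^*_{\mathrm{reg}} G_{\mathbb{C}})/\!/\!/ G_{\mathbb{C}}$. The Iwasawa-style homotopy equivalence $\Omega G^\vee \simeq \mathrm{Gr}_{G^\vee}$ transports this to description (i). The holomorphic symplectic form on the right-hand side descends from the canonical one on $T^*G_{\mathbb{C}}$; on the left it arises from the $E_2$-structure on $H_*\Omega G^\vee = H_*\Omega^2 BG^\vee$ already exploited in Theorem~\ref{e2action}. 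For (ii) $\Leftrightarrow$ (iii), I would apply Chevalley restriction for the group: regular conjugacy classes in $G_{\mathbb{C}}$ are classified by $T_{\mathbb{C}}/W$, and over a strongly regular $t \in T_{\mathbb{C}}$ (one with $e^\alpha(t)\neq 1$ for every root $\alpha$) the centralizer is $T_{\mathbb{C}}$ itself; so generically $\mathfrak{Z}_G \cong T^*T_{\mathbb{C}}/W$. A transverse rank-one ($\mathrm{SL}_2$) calculation at a wall $e^\alpha = 1$ shows that the naive $W$-invariant function $\alpha$ on the cotangent fiber fails to extend to the universal centralizer unless one adjoins $(e^\alpha-1)/\alpha$; $W$-equivariance then promotes the rank-one check to the full statement.

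The main obstacle is (iv), which cannot be derived algebraically from the others. The plan is to use Donaldson's theorem identifying charge-$n$ $\mathrm{SU}_2$ monopoles with degree-$n$ based rational self-maps of $\mathbb{P}^1$, giving the moduli as a Zariski-open in $\mathrm{Hilb}^n(T^*\mathbb{C}^\times)$, and separately to compute $BFM(\mathrm{SU}_n)$ from (iii) using the Weyl group $S_n$; matching the two then becomes an explicit exercise with trigonometric Calogero-Moser type coordinates. The hard step is that monopole moduli is hyperk\"ahler, so the identification depends on a choice of complex structure, and one must single out the one in which monopole moduli acquires the required affine algebraic structure. For $n=2$ this is precisely the Atiyah-Hitchin manifold of \cite{athit}; for general $n$ the statement is an instance of 3d mirror symmetry for pure $\mathcal{N}=4$ gauge theory, and a self-contained proof would require input from the monopole literature beyond the algebraic framework sufficient for (i)--(iii).
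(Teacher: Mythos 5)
The paper does not actually prove this theorem: it is presented as a list of known descriptions, with the substantive equivalences delegated to the citations \cite{bfm} (for (i) $\Leftrightarrow$ (ii) $\Leftrightarrow$ (iii)) and \cite{athit} (for (iv)), and only the symplectic structure is pinned down, via Remark~\ref{centralizer}, using description (ii). Your proposal is therefore not an alternative to the paper's argument so much as a reconstruction of what the cited literature contains, and as such it is essentially the right road map: taking (ii) as the definition, invoking the Bezrukavnikov--Finkelberg--Mirkovic/Ginzburg--Peterson identification of $H_*^{G^\vee}(\mathrm{Gr}_{G^\vee})$ with the coordinate ring of the universal centralizer for (i), Chevalley-type restriction plus a rank-one computation for (iii), Donaldson's rational-maps description of monopole moduli for (iv), and the direct abelian check for (v). Your computation for (v) is correct, and you are right to insist both that the symplectic form must be matched in each comparison (the theorem asserts an isomorphism of holomorphic symplectic manifolds, and only (ii) makes the form manifest) and that (iv) cannot be extracted from (i)--(iii) by pure algebra.

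One step as written would not go through: in the (ii) $\Leftrightarrow$ (iii) comparison you have transposed the two factors of $T^*T_{\mathbb{C}}$. In description (ii) the regularity condition is imposed on the cotangent vector $\xi\in\mathfrak{g}^*$, not on the group element, so the generic locus is $\{\alpha(\xi)\neq 0 \text{ for all roots}\}$, where $Z_{G_{\mathbb{C}}}(\xi)=T_{\mathbb{C}}$; the walls are at $\alpha=0$ in the $\mathfrak{t}^*$-direction, not at $e^\alpha=1$ in the $T_{\mathbb{C}}$-direction. Adjoining $u_\alpha=(e^\alpha-1)/\alpha$ imposes $e^\alpha=1$ over the wall $\alpha=0$ and replaces the collapsing torus direction by the unipotent coordinate $u_\alpha$ in the centralizer of a regular element with nilpotent component --- which is exactly what the $\mathrm{SL}_2$ model exhibits. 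Also, $\alpha$ is not $W$-invariant, so the phrase ``the naive $W$-invariant function $\alpha$'' should be replaced by the statement that the $W$-equivariant family of functions $u_\alpha$ must be adjoined to $\mathbb{C}[T^*T_{\mathbb{C}}]^W$. With that correction the rank-one reduction does what you want.
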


\begin{remark}\label{centralizer}
The moment map zero-fiber for the conjugation $G_{\mathbb{C}}$-action on $T^*_{\mathrm{reg}}
G_{\mathbb{C}}$ is the (regular) \emph{universal centralizer} $\mathcal{Z}_{\mathrm{reg}} =
\{ (g,\xi)\,|\, g\xi g^{-1} =\xi, \xi \text{ is regular}\}$. 
$\mathcal{Z}_{\mathrm{reg}}$ is smooth, and $BFM(G) = \mathcal{Z}_{\mathrm{reg}}/G_{\mathbb{C}}$, 
with stabilizer of constant dimension and local slices. 
This is the only one of the descriptions that makes the holomorphic symplectic structure evident.
\end{remark}

The space $BFM(G^\vee)$ inherits two projections from $T^*_{\mathrm{reg}}G_{\mathbb{C}}$: 
$\pi_v$, to the space 
$(\mathfrak{g}^\vee)^*_{\mathbb{C}}/G^\vee_{\mathbb{C}}\cong\mathfrak{t}_{\,\mathbb{C}}/W$ of 
co-adjoint orbits, and $\pi_h$, to the conjugacy classes in $G^\vee_{\mathbb{C}}$. 
Both are Poisson-integrable with Lagrangian fibers. The projection $\pi_v$ will have the more 
obvious meaning for gauge theory, capturing the $H^*(BG)$-module structure on fixed-point 
categories. The projection $\pi_h$ is closely related to the restriction to $T$ 
(and to the \emph{string topology} of flag varieties.) 

The symplectic structure on $BFM(G^\vee)$ relates to its nature as (an uncompletion of) 
the second Hochschild cohomology of the $E_2$-algebra $H_*(\Omega G)$.\footnote{Of course, the 
$E_2$ structure is trivial over the complex numbers and the algebra is quasi-isomorphic to 
its underlying dg ring of chains.} In fact, $BFM(G)$ contains the zero-fiber of $\pi_v$,
$Z:= \mathrm{Spec}H_*(\Omega G)$, as a smooth Lagrangian; it comes from the 
part of $\mathcal{Z}_{\mathrm{reg}}$ with nilpotent $\xi$ (cf.~Remark~\ref
{centralizer}). 

Theorem~\ref{e2action} and Lesson~\ref{lessons}.2 sheafify categories with topological 
$G$-action over the formal neighborhood of $Z$. However, it is the entire space $BFM(G^\vee)$ 
which is the correct receptacle for $G$-gauge theory: gauged TQFTs are objects in the 
$2$-category $\sqrt{\mathfrak{Coh}}(BFM(G^\vee))$. Clearly, that requires a 
rethinking of the notion: the definition of `topological category with $G$-action' as in \S2 would 
complete the $BFM$ space at the exceptional Lagrangian $Z$. Loosely speaking, we need to know a
theory together with all its deformations of the group action. 

The Lagrangian $Z$ replaces the zero-section from the torus case, and plays the role 
of the regular representation of $G$: $\Hom(Z,L)$ gives the underlying category of 
the representation $L$. The formal calculation is $\Hom_{C_*\Omega G}(C_*\Omega G; L) = L$, 
if we use Theorem~\ref{e2action} to model representations. Figure~2 below sketches  
$BFM(\mathrm{PSU}_2)$.

\begin{figure}
\includegraphics[height=3in]{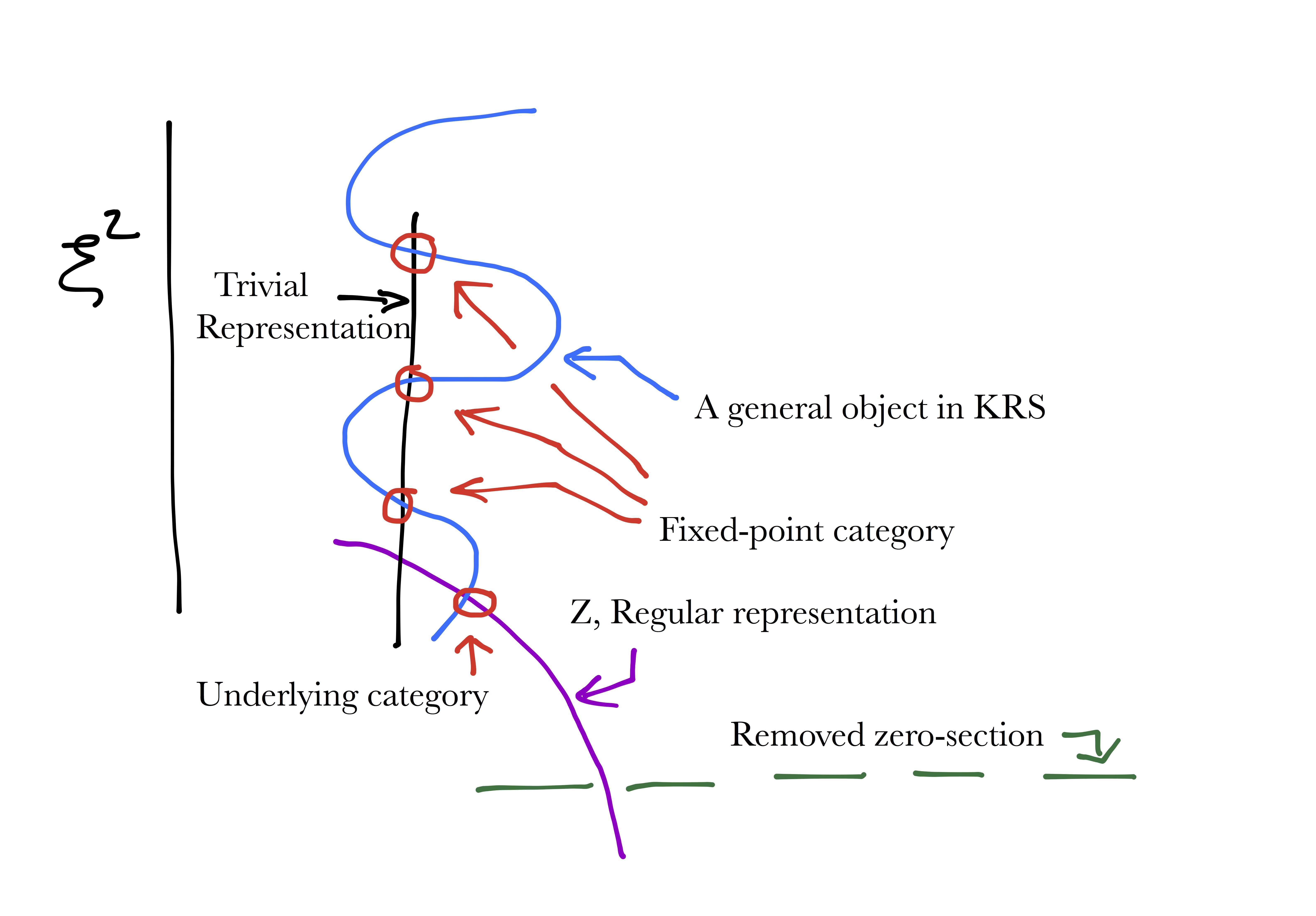}
\caption{$BFM$ space of $\mathrm{PSU}_2$; the fiber of $\pi_h$ at $1$ is 
$Z\cup$trivial representation}  
\end{figure}

\subsection{Induction by String topology} No map relates $BFM(T^\vee) = 
T^*T^\vee_{\mathbb{C}}$ and $BFM(G^\vee)$, because of the blow-up, but a 
holomorphic Lagrangian correspondence is defined from the branched cover
\begin{equation}\label{strintopind}
\xymatrix{
BFM(G^\vee) & BFM(G^\vee)\times_{\mathfrak{t}_{\mathbb{C}}/W} \mathfrak{t}_{\mathbb{C}}\ar[l] \ar[r] 
& T^*T^\vee_{\mathbb{C}}.
}
\end{equation}
The right map is neither proper not open.\footnote{$Z$ maps to $1\in T^\vee$, but most of the 
zero-section in $T^*T^\vee_{\mathbb{C}}$ is missed by the map.} A holomorphic 
Lagrangian correspondences could give a pair of adjoint functors between the respective 
$\sqrt{\mathfrak{Coh}}$ 
$2$-categories, thus a domain wall between $T$- and $G$- gauge theories 
(cf.~\S\ref{boundarycond}). This is indeed the case, and we can identify the functors.

\begin{theorem}
The correspondence~\eqref{strintopind} matches an adjoint pair of restriction-induction functors 
between categorical $T$- and $G$-representations. Induction from a category $\mathfrak{C}$ with 
topological $T$-action is effected by \emph{string topology with coefficients} of the flag variety $G/T$:
\[
\mathrm{Ind}(\mathfrak{C}) = C_*\Omega G \otimes_{C_*\Omega T} \mathfrak{C}.
\]
Restriction is the obvious functor.
\qed
\end{theorem}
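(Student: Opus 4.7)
The plan is to apply Theorem~\ref{e2action} to recast the assertion in the language of $E_2$-algebras and their module categories, and then match the resulting adjunction with~\eqref{strintopind} through the identification $\mathcal{O}(BFM(G^\vee))=H^{G^\vee}_*(\Omega G^\vee)$ from description (i). A topological $G$-action on $\mathfrak{C}$ is the datum of an $E_2$-map $C_*\Omega G\to HH^*(\mathfrak{C})$, and the inclusion $T\hookrightarrow G$ induces an $E_2$-homomorphism $\iota:C_*\Omega T\to C_*\Omega G$. Restriction is then precomposition with $\iota$; its left adjoint is extension of scalars along $\iota$, that is, the relative tensor product $\mathrm{Ind}(\mathfrak{C})=C_*\Omega G\otimes_{C_*\Omega T}\mathfrak{C}$, where $C_*\Omega G$ carries its natural $(C_*\Omega G,C_*\Omega T)$-bimodule structure. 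Adjointness and $E_2$-compatibility are formal consequences of the tensor-hom adjunction for bimodules over $E_2$-algebras.

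To identify this induction with string topology of $G/T$ with coefficients, I would exploit the principal fibration $T\to G\to G/T$, which yields the fibration $\Omega G\to \Omega(G/T)\to T$ on based loop spaces. An Eilenberg--Moore / bar-resolution argument then identifies $C_*\Omega G\otimes^L_{C_*\Omega T}\mathbb{C}\simeq C_*(G/T)$ when $\mathfrak{C}=\mathfrak{Vect}$ carries the trivial $T$-action, recovering the chains on the flag variety. For general $\mathfrak{C}$, the same resolution computes chains on the total space $G\times^T\mathfrak{C}$ of the associated category bundle over $G/T$, which is the content of ``string topology of $G/T$ with coefficients in $\mathfrak{C}$.''

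On the geometric side, the fiber product $BFM(G^\vee)\times_{\mathfrak{t}_{\mathbb{C}}/W}\mathfrak{t}_{\mathbb{C}}$ has coordinate ring $H^{G^\vee}_*(\Omega G^\vee)\otimes_{\mathrm{Sym}(\mathfrak{t})^W}\mathrm{Sym}(\mathfrak{t})\simeq H^{T^\vee}_*(\Omega G^\vee)$ by equivariant base change along $BT^\vee\to BG^\vee$. The left leg of~\eqref{strintopind} is dual to the forgetful map $H^{G^\vee}_*(\Omega G^\vee)\to H^{T^\vee}_*(\Omega G^\vee)$, while the right leg is dual to the map $H^{T^\vee}_*(\Omega T^\vee)\to H^{T^\vee}_*(\Omega G^\vee)$ induced by $\Omega T^\vee\hookrightarrow\Omega G^\vee$. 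Under the KRS dictionary sketched in \S\ref{boundarycond}, which converts Lagrangian correspondences into adjoint pairs between $\sqrt{\mathfrak{Coh}}$ $2$-categories by pull-then-push, these two maps realize precisely the restriction-induction pair constructed above at the level of $E_2$-modules.

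The principal obstacle is making this pull-push dictionary rigorous along the right leg of~\eqref{strintopind}, which the footnote reminds us is neither proper nor open. This lands squarely in the open problem of defining pull-push for non-proper holomorphic Lagrangian correspondences in $\sqrt{\mathfrak{Coh}}$; in the absence of a fully general framework, one can at best verify the correspondence on sufficient collections of test objects, such as the Lagrangian $Z$ of Remark~\ref{centralizer}, whose image under induction should recover the regular $G$-representation from the regular $T$-representation on the zero-section of $T^*T^\vee_{\mathbb{C}}$. A subsidiary technical issue is tracking the $E_2$-structure through the bar construction, so that the resulting induction is a functor of topological $G$-representations rather than merely of underlying dg-categories.
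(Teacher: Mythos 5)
The paper offers no proof of this theorem --- the statement is asserted with an immediate \qed --- so your sketch can only be measured against the surrounding text. Your overall strategy is exactly the one the paper implies: Theorem~\ref{e2action} converts topological actions into $E_2$-module data over $C_*\Omega G$, restriction is precomposition along $C_*\Omega T\to C_*\Omega G$, induction is its adjoint given by the relative tensor product, and the fiber product in \eqref{strintopind} is $\mathrm{Spec}\,H^{T^\vee}_*(\Omega G^\vee)$ by equivariant base change. Your closing caveats about the non-proper, non-open right leg and the undefined pull-push in $\sqrt{\mathfrak{Coh}}$ match the author's own disclaimers; nothing sharper is available in the paper.

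There is, however, one genuine error in your middle step. You assert $C_*\Omega G\otimes^L_{C_*\Omega T}\mathbb{C}\simeq C_*(G/T)$, ``recovering the chains on the flag variety.'' The correct identification is with $C_*\Omega(G/T)$, the chains on the \emph{based loop space} of $G/T$: the bar construction $B(\Omega G,\Omega T,\mathrm{pt})$ is the homotopy quotient $\Omega G/\Omega T$, which is the homotopy fiber of $B\Omega T\to B\Omega G$, i.e.\ of $T\to G$, i.e.\ $\Omega(G/T)$ --- not $G/T$ itself (that would be the homotopy fiber of $BT\to BG$, computed by $C_*G\otimes^L_{C_*T}\mathbb{C}$ one loop level down). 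A check with $G=\mathrm{SU}_2$: $H_*(\Omega S^3)\otimes^L_{H_*(\Omega S^1)}\mathbb{C}\cong\mathbb{C}[x]\otimes H_*(S^1)$ has a class in every non-negative degree, matching $H_*(\Omega S^2)$ and certainly not $H_*(S^2)$. The distinction is the whole point of the theorem's phrasing: ``string topology of $G/T$ with coefficients'' means the category of $C_*\Omega(G/T)$-modules (local systems on $G/T$, the wrapped Fukaya category of $T^*(G/T)$ per Abouzaid), which is what $C_*\Omega G\otimes_{C_*\Omega T}\mathfrak{Vect}$ actually produces; identifying the relative tensor product with $C_*(G/T)$ would give the wrong category. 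Correspondingly, your general-coefficients statement should be read as the derived global sections of the associated local system of categories $G\times_T\mathfrak{C}$ over $G/T$ (the paper's own Remark (i)), not as chains on a total space. The rest of your argument survives this correction unchanged.
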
 

\begin{remark}
(i) An alternative (slightly worse) description of induction is given by the category of (derived) 
global sections $\mathrm{R}\Gamma\big(G/T;\tilde{\mathfrak{C}}\big)$ 
for the associated local system $\tilde{\mathfrak{C}}$ of categories.\\
(ii) Neither description is quite correct. Just as the BFM spaces carry more information 
than the category and the action, so does induction.\\
(iii) For example, inducing from the representation $\mathfrak{Vect}_\tau$, for a point
$\tau\in T^\vee_{\mathbb{C}}$ which is \emph{not} central in $G$, by either method above, will 
appear to give zero. (This is what a homological algebra calculation of the curved string 
topology of $G/T$ for a non-trivial curving $\tau\in H^2(G/T;\mathbb{C}^\times)$ gives.) 
However, geometric induction gives the fiber of $\pi_v^{-1}(\tau)$. 
The puzzle is resolved by noting that none of those fibers meet the regular representation 
$Z$, so the underlying categories are null. We are letting $G$ act on categories without 
objects, and growing wiser.  \\
(iv) The `na\"ively induced' representations can serve to probe the entire $BFM$ space by 
abelianization. It is therefor not \emph{conceptually} more difficult to understand 
non-abelian gauged mirrors than abelian ones. However, the symplectically induced 
representations of the next section are much nicer.
\end{remark}

\subsection{Alternative model for induction} \label{altind}
I close with a new model for the correspondence \eqref{strintopind}, useful in a later 
mirror calculation. Call $\mathfrak{b}_{+,\mathrm{reg}} \subset\mathfrak{b}_+$ the open 
subset of regular elements. Identify $\mathfrak{b}_+ = \big(\mathfrak{g}_{\mathbb{C}}/
\mathfrak{n}_+\big)^*$, $B_+$-equivariantly; the last space matches the fibers of the 
bundle, over $B_+\subset G_{\mathbb{C}}$, of co-normals to the $N_+$-translation orbits. 
Using this to define the left map below and projection on the right gives a holomorphic 
Lagrangian correspondence
\[
\xymatrix{
& \displaystyle{\frac{B_+\times \mathfrak{b}_{+,\mathrm{reg}}}{B_+}} \ar[dl]\ar[dr]& \\
T^*_{\mathrm{reg}}G_{\mathbb{C}}/\!/_{\mathrm{ad}}B_+& & T^*T_{\mathbb{C}} 
}
\] 
having divided by the conjugation action of $B^\vee_+$. 
We can also divide out by $B_+$ in the defining correspondence for $BFM(G)$,
\[
\xymatrix{
  BFM(G) & \mathcal{Z}_{\mathrm{reg}}/B_+ \ar[l]\ar[r] & 
  		T^*_{\mathrm{reg}}G_{\mathbb{C}}/\!/_{\mathrm{ad}}B_+.
}
\]
The composition of these two can be shown to yield \eqref{strintopind} (for the group $G$).

\section{Mirrors of flag varieties}\label{mirflag}
I will now explain the place of flag varieties in the mirror view of gauge theory.  
Lifting to the torus-equivariant picture will recover a construction of 
K.~Rietsch~\cite{r}. 

\subsection{Flag varieties as domain walls}
Let $L\subset G$ be a Levi subgroup, centralizer of a dominant weight $\lambda:\mathfrak{l}\to 
\mathrm{i}\mathbb{R}$.
The flag variety $X=G/L$ is a symplectic manifold with Hamiltonian $G$-action (the co-adjoint 
orbit of $\lambda$), and as such it should have a mirror holomorphic Lagrangian in $BFM(G^\vee)$. 
This will be true, but we forgot some structure relevant to gauge theory. Namely, we can use 
$G/L$ to \emph{symplectically induce} categorical representations from $L$ to $G$. 

A categorical representation $\mathfrak{C}$ of $L$ gives the local system of categories 
$\tilde{\mathfrak{C}} = G\times_L \mathfrak{C}\to X$, and we can construct the Fukaya category 
of $X$ with coefficients in $\tilde{\mathfrak{C}}$. (Objects would be 
horizontal sections of objects over Lagrangians, and Floer complexes can be formed in the usual 
way from the $\Hom$-spaces over intersections.) In fact, the weight $\lambda$ (or rather, 
its exponential $e^\lambda$ in the center of $L^\vee_{\mathbb{C}}$) defines a 
topological representation $\mathfrak{Vect}_\lambda$ of $L$, and we can think of the ordinary 
Fukaya category $\mathfrak{F}(X,\lambda)$ as the symplectic induction from the latter. The precise meaning 
is that deforming $\lambda$ in $\mathfrak{Vect}_\lambda$ achieves the same effect as the matching 
deformation of the symplectic form. An imaginary variation of $\lambda$ (movement in the unitary 
group $L^\vee$) has the effect of adding a unitary $B$-field twist to the Fukaya category.

\begin{remark}
Left adjoint to the symplectic induction functor $\mathrm{SInd}_L^G$ is a \emph{symplectic restriction} 
from $G$ to $L$. This is not the ordinary (forgetful) restriction, which instead is adjoint 
to string topology induction (\S5). For example, when $L=T$, the spectral decomposition under $T$ of the 
symplectic restriction of $\mathfrak{C}$ would extract the multiplicities of the $\mathfrak{F}(X,\tau)$ 
in $\mathfrak{C}$, rather than those of the $\mathfrak{Vect}_\tau$. 
\end{remark}

This pair of functors is a new \emph{domain wall} between pure $3$-dimensional $G$- and $L$-gauge 
theories. On the mirror side, we can hope to represent a domain wall  by a holomorphic Lagrangian 
correspondence between $BFM(L^\vee)$ and $BFM(G^\vee)$. 
We will be fortunate to identify this correspondence as an open embedding. 

To recover the mirror of $X$ in its various incarnations (as a symplectic manifold, or 
a $G$-equivariant symplectic one) we must apply boundary conditions to the two gauge 
theories, aiming for the `sandwich picture' of a $2D$ TQFT, as in \S\ref{boundarycond}.
For example, to find the underlying symplectic manifold $(X,\lambda)$, we must apply the 
representation $\mathfrak{Vect}_\lambda$ of $L$ and the regular representation $Z$ of $G$. 
I shall carry out this (and a more general) exercise in the final section. 
 
The study of symplectically induced representations can be motivated by the following 
conjecture, the evident non-abelian counterpart of Conjecture~\ref{torusgit} (with the 
difference that it seems much less approachable).

\begin{conjecture}
For a Hamiltonian $G$-action on the compact symplectic manifold $X$ and a regular value $\mu$ 
of the moment map, the Fukaya category $\mathfrak{F}(X/\!/G)$, reduced at the orbit of $\mu$ 
(and with unitary $B$-field $\mathrm{i}\nu$) is the multiplicity in $\mathfrak{X}$ of the 
representation symplectically induced from $\mathfrak{Vect}_{\mu+i\nu}$.
\end{conjecture}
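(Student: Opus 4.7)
The plan is to combine the Guillemin--Sternberg shifting trick with a \emph{non-abelian Yoneda} identity computing equivariant $\Hom$ as a symplectic quotient, both phrased inside the $\sqrt{\mathfrak{Coh}}(BFM(G^\vee))$ $2$-category of \S\ref{bfmsect}.

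First, set $L := \mathrm{Stab}_G(\mu)$, a Levi subgroup since $\mu$ is a regular value of the moment map, so the coadjoint orbit $\mathcal{O}_\mu\cong G/L$. By \S\ref{mirflag}, the symplectic induction $\mathrm{SInd}_L^G\,\mathfrak{Vect}_{\mu+\mathrm{i}\nu}$ is realized, as a categorical $G$-representation, by $\mathfrak{F}(\mathcal{O}_\mu,\mathrm{i}\nu)$ (symplectic form normalized by $\mu$, unitary $B$-field $\mathrm{i}\nu$). The multiplicity in $\mathfrak{F}(X)$ is thus
\[
\Hom_{\sqrt{\mathfrak{Coh}}(BFM(G^\vee))}\bigl(\mathfrak{F}(\mathcal{O}_\mu,\mathrm{i}\nu),\,\mathfrak{F}(X)\bigr).
\]

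Second, apply the non-abelian equivariant Yoneda identity: for Hamiltonian $G$-manifolds $M,X$,
\[
\Hom_G\bigl(\mathfrak{F}(M),\,\mathfrak{F}(X)\bigr)\;\simeq\;\mathfrak{F}\bigl((M^-\times X)/\!/_0 G\bigr),
\]
where $M^-$ denotes $M$ with reversed symplectic form. This is the expected non-abelian counterpart of the Poincar\'e-bundle spectral decomposition of \S\ref{torusz2}; equivalently, it can be deduced by adjunction from the right-adjoint variant of the symplectic restriction of \S\ref{mirflag} (which exists because $G/L$ is compact, up to a twist by the canonical bundle $\omega_{G/L}$) combined with the Levi analogue of Conjecture~\ref{torusgit} applied to $\mathrm{SRes}^G_L\,\mathfrak{F}(X)$.

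Third, invoke the Guillemin--Sternberg shifting trick: the combined $G$-moment map on $\mathcal{O}_\mu^-\times X$ vanishes precisely on $\mu_X^{-1}(\mathcal{O}_\mu)$, so
\[
(\mathcal{O}_\mu^-\times X)/\!/_0 G\;=\;X/\!/_\mu G,
\]
and the $B$-field $\mathrm{i}\nu$ is inherited from the induction. This delivers $\mathfrak{F}(X/\!/_\mu G,\mathrm{i}\nu)$, the assertion of the conjecture.

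The principal obstacle is Step~2. The non-abelian Yoneda identity (equivalently, the right-adjoint symplectic restriction and its compatibility with $L$-spectral decomposition) is a structural property of the still-conjectural KRS $2$-category of $BFM(G^\vee)$; justifying it requires functoriality of Fukaya categories under non-proper Lagrangian correspondences, the Levi extension of Conjecture~\ref{torusgit}, and careful tracking of the imaginary $B$-field as a curving deformation away from the exceptional Lagrangian $Z$ where the homological calculation is known to fail (Lesson~\ref{lessons}.1, \S\ref{torusz2}). These are precisely the technical difficulties that make the non-abelian conjecture less approachable than its torus counterpart.
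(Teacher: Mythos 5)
The paper offers no proof of this statement: it is presented purely as a conjecture, explicitly flagged as ``much less approachable'' than its torus counterpart (Conjecture~\ref{torusgit}), so there is no argument of the paper's to compare yours against. Judged on its own terms, your proposal has a genuine gap, and you have in fact located it yourself: Step~2 is not a lemma you may invoke, it is the entire content of the conjecture. The ``non-abelian Yoneda identity'' $\Hom_G\bigl(\mathfrak{F}(M),\mathfrak{F}(X)\bigr)\simeq\mathfrak{F}\bigl((M^-\times X)/\!/_0G\bigr)$, specialized to $M=\mathcal{O}_\mu$, differs from the statement to be proved only by the Guillemin--Sternberg shifting trick of Step~3, which is elementary symplectic bookkeeping; and your proposed justification of it via ``the Levi analogue of Conjecture~\ref{torusgit} applied to $\mathrm{SRes}^G_L\,\mathfrak{F}(X)$'' is circular, since Conjecture~\ref{torusgit} is itself unproven in the paper and its Levi-subgroup analogue for the symplectic restriction is precisely the multiplicity statement at issue. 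Even granting the KRS formalism of \S\ref{bfmsect}, nothing in the paper supplies the right adjoint of $\mathrm{SInd}_L^G$ at the level of Fukaya categories of products, nor the identification of its spectral components with Fukaya categories of reduced spaces; these are exactly the ``wobbly definitions'' the paper declines to make precise. So the proposal is a reformulation --- a useful one, compressing the conjecture into a single adjunction-plus-spectral-decomposition statement --- rather than a proof.

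Two smaller points. First, the inference ``$L:=\mathrm{Stab}_G(\mu)$ is a Levi subgroup since $\mu$ is a regular value of the moment map'' conflates two notions of regularity: being a regular value of $\mu_X\colon X\to\mathfrak{g}^*$ constrains the differential of the moment map along the fiber and says nothing about the coadjoint stabilizer of the point $\mu$; the stabilizer of any $\mu\in\mathfrak{t}^*$ is a Levi subgroup regardless, so the conclusion stands but the reason given is wrong. Second, Step~1 (identifying $\mathrm{SInd}_L^G\,\mathfrak{Vect}_{\mu+\mathrm{i}\nu}$ with $\mathfrak{F}(\mathcal{O}_\mu,\mathrm{i}\nu)$) is consistent with \S\ref{mirflag}, and Step~3 is correct; the entire burden of the conjecture rests on Step~2.
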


\subsection{The Toda isomorphism} The following isomorphism of holomorphic symplectic manifolds 
is mirror to symplectic induction. It fits within a broad range of related results (`Whittaker 
constructions') due to Kostant. Its relation to Fukaya categories of flag varieties is mysterious,  
and now only understood with reference to the appearance of the Toda integrable system in 
the Gromov-Witten theory of flag varieties \cite{gk, kos}. From that point of view, the 
isomorphism enhances the Toda system by supplying the conjugate family of commuting Hamiltonians, 
pulled back from conjugacy classes in the group, rather than orbits the Lie algebra. 

The mirror picture of $G$-gauge theory involves the Langlands dual group $G^\vee$ of $G$, but 
the notation is cleaner with $G$. With notation as in \S5, call $\chi: \mathfrak{n}\to 
\mathbb{C}^\times$ the regular character (unique up to $T_{\mathbb{C}}$-conjugation) and 
consider the \emph{Toda space}, the holomorphic symplectic quotient of $T^*G_{\mathbb{C}}$
\[
T(G): = (N,\chi)\backslash\!\backslash T^*G_{\mathbb{C}} /\!/(N,\chi)
\]  
under the left$\times$right action of $N$, reduced at the point $(\chi,\chi)\in 
\mathfrak{n}^*\oplus\mathfrak{n}^*$. 
\begin{theorem}\label{todaisom}
We have a holomorphic symplectic isomorphism
\[
T(G) = (N,\chi)\backslash\!\backslash T^*G_{\mathbb{C}} /\!/(N,\chi) 
	\cong T^*_{\mathrm{reg}}G_{\mathbb{C}}/\!/_{\mathrm{Ad}}G_{\mathbb{C}}= BFM(G)
\]
induced from the presentation of the two manifolds as holomorphic symplectic reductions 
of the same manifold $T^*_{\mathrm{reg}}G_{\mathbb{C}}$. 
\end{theorem}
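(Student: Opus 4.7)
The plan is to use Kostant's Whittaker slice theorem as the bridge between the two reductions. Trivialize $T^*G_{\mathbb{C}} \cong G_{\mathbb{C}} \times \mathfrak{g}^*_{\mathbb{C}}$ by left translation; under this the left and right $G_{\mathbb{C}}$-actions have moment maps of the form $\mathrm{Ad}^*_g\xi$ and $\xi$ (up to sign), and the diagonal conjugation action has moment map $\xi - \mathrm{Ad}^*_g\xi$, with zero-fiber the universal centralizer $\mathcal Z$. The key input throughout is Kostant's theorem: the affine slice $e+\mathfrak{z}_f \subset \mathfrak{g}^*_{\mathbb{C}}$ is a transverse cross-section meeting every regular coadjoint $G_{\mathbb{C}}$-orbit exactly once, and it simultaneously labels the free $N$-orbits on the Whittaker level set $\{\xi : \xi|_{\mathfrak n} = \chi\}$.

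For the Toda side, I first perform the right $(N,\chi)$-reduction. The constraint $\xi|_{\mathfrak n} = \chi$ combined with Kostant's slice identification yields
\[
T^*G_{\mathbb{C}}/\!/(N,\chi) \;\cong\; G_{\mathbb{C}} \times (e + \mathfrak{z}_f),
\]
which automatically lies in the regular locus. Now impose the left $(N,\chi)$-reduction: the moment condition $(\mathrm{Ad}^*_g\xi)|_{\mathfrak n} = \chi$ together with the freedom of the left-$N$ action permits one to move $\mathrm{Ad}^*_g\xi$ uniquely into $e+\mathfrak{z}_f$. Since $\mathrm{Ad}^*_g\xi$ and $\xi$ then both lie in the slice and in the same regular $G_{\mathbb{C}}$-orbit, Kostant uniqueness forces $\mathrm{Ad}^*_g\xi = \xi$; thus $g$ centralizes $\xi$. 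The regular centralizer $G_\xi$ is abelian, so it acts trivially on itself and no residual quotient remains, producing
\[
T(G) \;\cong\; \{(g,\xi) : \xi \in e + \mathfrak{z}_f,\ g \in G_\xi\}.
\]

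For the $BFM$ side, $BFM(G) = \mathcal Z_{\mathrm{reg}}/G_{\mathbb{C}}$: Kostant's slice lets me pick a canonical representative $\xi \in e + \mathfrak{z}_f$ for each regular $G_{\mathbb{C}}$-orbit, and the residual stabilizer (again the abelian regular centralizer $G_\xi$) acts trivially on the centralizing element $g \in G_\xi$. The same manifold $\{(g,\xi): \xi \in e+\mathfrak{z}_f,\ g \in G_\xi\}$ emerges, and the identification with $T(G)$ is tautological on representatives. Both holomorphic symplectic structures descend from the canonical form on $T^*G_{\mathbb{C}}$ by Marsden--Weinstein reduction, and the constructed bijection matches these structures by naturality, hence is a holomorphic symplectic isomorphism.

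The main obstacle is conceptual: the two reductions proceed by different (non-commuting) subgroups of $G_{\mathbb{C}} \times G_{\mathbb{C}}$, namely the diagonal $G_{\mathrm{diag}}$ at $0$ and the anti-diagonal $N \times N$ at $(\chi,\chi)$, at different moment levels; these subgroups do not span $G_{\mathbb{C}}\times G_{\mathbb{C}}$ and reduction-in-stages is not formally applicable. The content of the theorem is precisely that Kostant's slice $e+\mathfrak{z}_f$ is a cross-section adapted to \emph{both} quotient constructions, so that the same intermediate manifold $G_{\mathbb{C}} \times (e+\mathfrak{z}_f)$ mediates both reductions; once Kostant's theorem is granted, the isomorphism reduces to direct inspection of matched representatives, and the symplectic compatibility follows automatically from the common origin on $T^*_{\mathrm{reg}}G_{\mathbb{C}}$.
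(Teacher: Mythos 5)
Your argument is correct and is essentially the paper's own proof: both rest on Kostant's slice/Whittaker theorem to produce the pointwise bijection (the paper phrases your slice normalization as the existence of a unique $\nu\in N$ conjugating $g\xi g^{-1}$ back to $\xi$, and checks surjectivity by conjugating a point of $\mathcal{Z}_{\mathrm{reg}}$ into $\pi^{-1}(\chi)$), and both then appeal to the common symplectic-reduction origin on $T^*_{\mathrm{reg}}G_{\mathbb{C}}$ for smoothness and symplectic compatibility. The one place you are vaguer than the paper is the final step: rather than ``naturality,'' the paper justifies the symplectic claim by noting that the composed correspondence $\mathcal{T}\times_{T^*_{\mathrm{reg}}G_{\mathbb{C}}}\mathcal{Z}_{\mathrm{reg}}$ must be formed in the common quotient $T^*_{\mathrm{reg}}G_{\mathbb{C}}/\!/_{\mathrm{Ad}}N$ (since the two group actions share the conjugation action of $N$), where the two coisotropics intersect transversally, so that the composition is an honest Lagrangian graph and Poisson preservation supplies the Jacobian criterion.
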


\begin{proof}
The $N\times N$ moment fiber in $T^*G_{\mathbb{C}}\cong G_{\mathbb{C}}\times 
\mathfrak{g}_{\mathbb{C}}^*$ (by left trivialization) is 
\[
\mathcal{T}:=\{(g,\xi)\in G_{\mathbb{C}}\times \mathfrak{g}_{\mathbb{C}}^*\, |\, \pi(\xi) = 
	\pi(g\xi g^{-1})=\chi\},
\]
where $\pi:\mathfrak{g}_{\mathbb{C}}^*\to\mathfrak{n}^*$ is the projection. As $\pi^{-1}(\chi)$ 
consists of regular elements, we may use $T^*_{\mathrm{reg}}G_{\mathbb{C}}$ instead. 
Now, $N$ acts freely on $\pi^{-1}(\chi)$, with Kostant's global slice, so the $N\times N$ 
action on $\mathcal{T}$ is free also and $T(G)= N\backslash\mathcal{T}/N$ is a manifold. 

The moment map fibers $\mathcal {T}$ and $\mathcal{Z}_{\mathrm{reg}}$ (for the $\mathrm{Ad}$-action of 
$G_{\mathbb{C}}$) provide holomorphic Lagrangian correspondences
\begin{equation}\label{todabfm}
\xymatrix{
& \mathcal{T} \ar[dl]\ar[dr]& & \ar[dl]\mathcal{Z}_{\mathrm{reg}} \ar[dr]& \\
T(G) & & T^*_{\mathrm{reg}}G_{\mathbb{C}} & & BFM(G)
}
\end{equation}
whose composition $\mathcal{T} \times_{T^*_{\mathrm{reg}}G_{\mathbb{C}}} \mathcal{Z}_{\mathrm{reg}}$, 
I claim, induces an isomorphism. Actually, the clean  correspondence must mind the fact 
that the two actions on $T^*G$, of $N\times N$ and $G$, respectively, have  in common the conjugation 
action of $N$ (sitting diagonally in $N\times N$): so we must really factor through $T^*_{\mathrm{reg}}
G_{\mathbb{C}}/\!/_{\mathrm{Ad}}(N)$, within which the co-isotropics $\mathcal{T}/_{\mathrm{Ad}}N$ 
and $\mathcal{Z}_{\mathrm{reg}}/_\mathrm{Ad}N$ turn out to intersect transversally.

We check that the composition in \eqref{todabfm} induces a bijection on points: preservation 
of the Poisson structure then supplies the Jacobian criterion. Choose $(g,\xi)\in \mathcal{T}$; 
then, $\xi, g\xi g^{-1}\in \pi^{-1}(\chi)$ are in the same $G_{\mathbb{C}}$-orbit 
in $\mathfrak{g}^*_{\mathbb{C}}$. Kostant's slice theorem ensures that the two elements are then 
$\mathrm{Ad}$-related by a unique $\nu\in N$, $\nu g\xi (\nu g)^{-1} =\xi$. There is then, up to 
right action of $N$, a unique $(g',\xi')\in \mathcal{Z}_{\mathrm{reg}}$ in the $N\times N$-orbit 
of $(g,\xi)$. We thus get an injection $T(G)\hookrightarrow BFM(G)$. To see surjectivity, conjugate 
a chosen $(h,\eta)\in \mathcal{Z}_{\mathrm{reg}}$ to bring $\eta$ into $\pi^{-1}(\chi)$. The 
result is in $\mathcal{T}$ (and is again unique up to $N$-conjugation).  
\end{proof}

\begin{remark}
The space $T(G)$ has a hyperk\"ahler structure; it comes from a third description, as a moduli 
space of solutions to Nahm's equations. This is closely related to a conjectural derivation 
of my mirror conjecture \eqref{key2} below from Langlands (electric-magnetic) duality in 
$4$-dimensional $N=4$ Yang-Mills theory. (I am indebted to E.~Witten for this explanation.)
\end{remark}

\subsection{The mirror of symplectic induction}
Inclusion of the open cell $N\times w_0\cdot T_{\mathbb{C}}\times N\subset G_{\mathbb{C}}$ 
leads to a holomorphic symplectic embedding $T^*T_{\mathbb{C}} \subset T(G)$. Sending a co-tangent vector 
to its co-adjoint orbit projects $T(G)$ to $\mathfrak{g}^*_{\mathbb{C}}/\!/G^{ad}_{\mathbb{C}}$, and 
the functions on the latter space lift to the commuting Hamiltonians of the Toda integrable system; 
so the theorem completes the picture by providing a complementary set of Hamiltonians lifted 
from the conjugacy classes of $G$. 

More generally, if $L\subset G$ is a Levi subgroup, with representative $w_L\in L$ of its longest 
Weyl element, and with unipotent group $N_L = N\cap L_{\mathbb{C}}$, then $\chi$ restrict to a regular 
character of $N_L$ and the inclusion 
\[
N\times_{N_L} w_0w_L^{-1}\cdot L_{\mathbb{C}} \times_{N_L} N \subset G_{\mathbb{C}}
\]
determines an open embedding $T(L)\subset T(G)$. The following is, among others, a character formula 
for induced representations. It relies on too many wobbly definitions to be called a theorem, but 
assuming it is meaningful, its truth can be established form existing knowledge. 

\begin{conjecture}\label{key2}
Via the Toda isomorphism, the embedding $T(L^\vee)\subset T(G^\vee)$ is mirror to symplectic 
induction from $L$ to $G$, representing the flag variety $G/L$ as a domain wall between $L$- and 
$G$-gauge theories. 
\end{conjecture}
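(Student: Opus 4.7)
The plan is to reformulate the conjecture as the agreement of two a priori distinct functors
$\sqrt{\mathfrak{Coh}}(BFM(L^\vee))\to \sqrt{\mathfrak{Coh}}(BFM(G^\vee))$: on one hand, the pushforward along the
open embedding $T(L^\vee)\hookrightarrow T(G^\vee)$ transported through Theorem~\ref{todaisom}; on the other, the
symplectic induction functor $\mathrm{SInd}_L^G$ defined by the flag variety $G/L$ as a domain wall. Both carry
natural left adjoints (open restriction on one side, symplectic restriction on the other), so it suffices to match
the right-pointing arrows on a family of generating objects. I would test on the one-dimensional topological
representations $\mathfrak{Vect}_\lambda$ for $\lambda$ in the center of $L^\vee_{\mathbb{C}}$, and then
propagate to nearby strata by deformation through curvings in the sense of Lesson~\ref{lessons}.3.

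\textbf{Reduction to flag-variety mirrors.} On the Fukaya side,
$\mathrm{SInd}_L^G(\mathfrak{Vect}_\lambda) = \mathfrak{F}(G/L,\lambda)$ by the definition in \S\ref{mirflag},
and I would identify its mirror point in $BFM(G^\vee)$ using the Toda presentation of the small quantum
cohomology of flag varieties due to Givental, Kim and Kostant \cite{gk,kos}. Their results realize
$QH^*(G/L,\lambda)$ as the coordinate ring of a Lagrangian leaf in $T(G^\vee)$ cut out by fixing the Toda
Hamiltonians to the values prescribed by $\lambda$; Theorem~\ref{todaisom} translates this into the statement
that the support of the mirror lies in the image of $T(L^\vee)\subset T(G^\vee)$ at precisely the point
representing $\mathfrak{Vect}_\lambda$ on the $L$-side. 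This is the character-formula aspect of the conjecture.

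\textbf{A direct geometric match.} A cleaner route uses the alternative correspondence of \S\ref{altind} together
with the Toda correspondence \eqref{todabfm}. The parabolic open cell
$N\times_{N_L} w_0 w_L^{-1}\cdot L_{\mathbb{C}}\times_{N_L} N \subset G_{\mathbb{C}}$ restricts the two
Lagrangian correspondences in \eqref{todabfm} to the corresponding data for the Levi $L$, producing a diagram
between $BFM(L^\vee)$ and $BFM(G^\vee)$ that, on Fukaya-categorical input, realizes precisely the sandwich for
the $G/L$ domain wall. The discrepancy between the string-topology induction of \S5 and genuine symplectic
induction is absorbed by the Whittaker character $\chi$ built into $T(G)$: the shift by $w_0 w_L^{-1}$ on the
open cell is exactly the modification needed to convert na\"ive induction into the symplectic one, as can be
checked by restricting further to $T$ and comparing with the abelian story of \S\ref{torusz2}.

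\textbf{Main obstacle.} The principal difficulty is the one flagged after the conjecture: a rigorous foundation
for Fukaya categories with coefficients in categorical representations and for the full $KRS$ enhancement is
missing. Concretely, the hard part is checking that the match persists for the entire tower of curved deformations
of $\mathfrak{F}(G/L,\lambda)$ — not just the underlying category as a $G$-representation, but its sheafification
over a formal neighborhood of its support in $BFM(G^\vee)$. Over the formal neighborhood of the Lagrangian $Z$,
this can be extracted from equivariant Gromov-Witten theory and the quantum Chevalley formulae; globalising over
$T(G^\vee)$ relies on propagation through curvings in the sense of Lesson~\ref{lessons}.3, which in turn depends
on hypotheses about the behaviour of Fukaya categories under non-exact Hamiltonian deformations that remain
open in the generality required.
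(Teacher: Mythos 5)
Your overall strategy is the one the paper itself adopts: grant that $BFM(G^\vee)$ is the correct home for $G$-gauge theory, reduce the conjecture to the known Toda-lattice description of the quantum cohomology of flag varieties \cite{gk, cf, kos, mih}, and concede that everything else is foundational. But the central identification step, as you state it, picks out the wrong Lagrangian. You assert that Givental--Kim--Kostant realize $QH^*(G/L,\lambda)$ as the coordinate ring of a Lagrangian leaf ``cut out by fixing the Toda Hamiltonians to the values prescribed by $\lambda$.'' Two things go wrong. First, the non-equivariant quantum cohomology at fixed quantum parameter is a finite-dimensional algebra, so its spectrum is a finite scheme, not a Lagrangian; the object whose spectrum is the half-dimensional leaf is the \emph{$G$-equivariant} quantum cohomology $qH^*_G(G/L)$, viewed as a module over $H^*(BG)=\mathbb{C}[\mathfrak{g}]^G$ (the Toda Hamiltonians) via $\pi_v$. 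In particular the Toda Hamiltonians are \emph{not} constant on the leaf --- they restrict to a finite flat map of degree equal to the Euler characteristic of $G/L$. Second, the loci on which the Toda Hamiltonians \emph{are} fixed are the fibers of $\pi_v$ (the zero fiber being the regular representation $Z$), which is the wrong foliation: the leaf $\Lambda(q)$ is cut out by the \emph{complementary} family of Hamiltonians lifted from conjugacy classes, i.e.\ by fixing $q=e^\lambda$, which lives in the center of $L^\vee_{\mathbb{C}}$ and not in $\mathfrak{t}_{\mathbb{C}}/W$ --- so ``values of the Toda Hamiltonians prescribed by $\lambda$'' does not even typecheck. Since the entire point of \S\ref{mirflag} is to distinguish symplectic induction from the na\"ive (string-topology) induction of \S5 by distinguishing these two transverse integrable systems, this is not a cosmetic slip.

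Once corrected, the argument closes the way the paper's remark says it does: the $\mathbb{C}[\mathfrak{g}]^G$-module structure on $qH^*_G(G/L)$ determines the support together with its finite flat map to $\mathfrak{t}_{\mathbb{C}}/W$, and the requirement that the support be \emph{Lagrangian} in $BFM(G^\vee)$ then pins the embedding uniquely; your substitute uniqueness mechanism (``test on $\mathfrak{Vect}_\lambda$ and propagate by curvings'') does not supply this rigidity, since many non-Lagrangian extensions of a given family of points exist. Your ``direct geometric match'' via the cell $N\times_{N_L}w_0w_L^{-1}L_{\mathbb{C}}\times_{N_L}N$ is in the right spirit --- it is close to what Proposition~\ref{altcor} does for the composite with the string-topology wall --- but the assertion that the Whittaker twist by $\chi$ and the shift by $w_0w_L^{-1}$ ``exactly convert na\"ive induction into symplectic induction'' is precisely the content of the conjecture and cannot be taken as an input.
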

\begin{example}
With the torus $L=T$, a one-dimensional representation of $T$ is described by a point 
$q\in T^\vee$, represented in $\sqrt{\mathfrak{Coh}}(T^*T^\vee)$ by the cotangent space at $q$. 
Its image under the Toda isomorphism, a Lagrangian leaf $\Lambda(q)\subset BFM(G)$, 
is the symplectically induced representation, or the $G$-equivariant Fukaya category 
of the flag variety $G/T$ with quantum parameter $q$. The analogue of the character 
is the structure sheaf $\mathcal{O}_{\Lambda(q)}$, whose algebra of global sections 
is the $G$-equivariant quantum cohomology of $G/T$ \cite{gk}.
\end{example}  
\begin{remark}
It is difficult to prove the conjecture without a precise definitions (of equivariant 
Fukaya categories with coefficients and of the $KRS$ $2$-category). Nevertheless, accepting 
that $BFM(G^\vee)$ as the correct mirror of $G$-gauge theory, the conjecture follows from 
known results about the equivariant quantum cohomology of flag varieties \cite{gk, cf, mih}. 
The latter describe $qH^*_G(G/L)$ as a module over 
$H^*(BG)= \mathbb{C}[\mathfrak{g}]^G$, the algebra of Toda Hamiltonians, induced from 
the projection $\pi_v$. The symplectic condition turns out to pin the map  uniquely.
\end{remark}

\subsection{Foliation by induced representations}
Recall (Example~\ref{cuspidalreps}) the one-dimensional representations of a Levi subgroup 
$L\subset G$, corresponding to the points in the center of $L^\vee_{\mathbb{C}}$. Let us 
call them \emph{cuspidal}: they are not symplectically induced from a smaller Levi subgroup.  
(Such a symplectic induction produces representation of rank equal to the Euler characteristic 
of the flag variety.) The following proposition suggests that these induced representations 
are better suited to spectral theory that the na\"ively induced ones 
of \S\ref{bfmsect}.

\begin{theorem}
The space $BFM(G^\vee)$ is smoothly foliated by symplectic inductions of cuspidal representations: 
each leaf comes from  a unique cuspidal representation of a unique Levi subgroup $L$, 
with $T\subset L \subset G$.
\end{theorem}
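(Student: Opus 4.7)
The plan is to produce the leaves as suitably stratified $\pi_h$-fibers and identify each with the symplectic induction of a cuspidal representation via the open embeddings $T(L^\vee) \hookrightarrow T(G^\vee)$ of Conjecture~\ref{key2}. The bijection between leaves and pairs (Levi, cuspidal representation) will come from the algebraic Jordan decomposition on $\mathcal{Z}_{\mathrm{reg}}$.

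First, for each standard Levi $L$ with $T \subset L \subset G$ and each $q \in Z(L^\vee_{\mathbb{C}})$ satisfying the cuspidality condition $Z_{G^\vee_{\mathbb{C}}}(q) = L^\vee$ (no enlargeable Levi), I would define $\Lambda(L,q) \subset BFM(G^\vee)$ as the image of the Lagrangian fiber $\pi_h^{-1}(q) \subset BFM(L^\vee)$ under the open embedding $T(L^\vee) \hookrightarrow T(G^\vee)$. Poisson-integrability of $\pi_h$ on $BFM(L^\vee)$ makes this fiber a smooth Lagrangian of dimension $\operatorname{rk} L = \operatorname{rk} G$, and openness of the embedding transports the Lagrangian property to $\Lambda(L,q)$. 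By Conjecture~\ref{key2}, $\Lambda(L,q)$ is precisely the symplectic induction of the cuspidal representation $\mathfrak{Vect}_q$ of $L$.

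Next I would show that these $\Lambda(L,q)$ partition $BFM(G^\vee)$ and that $(L,q)$ is uniquely recovered from the leaf. Using the description $BFM(G^\vee) = \mathcal{Z}_{\mathrm{reg}}/G^\vee_{\mathbb{C}}$ of Remark~\ref{centralizer}, a point is a $G^\vee_{\mathbb{C}}$-orbit of $(g,\xi)$ with $\xi$ regular and $g$ in the abelian centralizer $Z_{G^\vee_{\mathbb{C}}}(\xi)$. Inside this commutative algebraic group, the Jordan decomposition $g = g_s g_u$ is canonical and algebraic. Conjugate to put $g_s \in T^\vee_{\mathbb{C}}$ and set $L^\vee := Z_{G^\vee_{\mathbb{C}}}(g_s)$, the standard Levi for which $g_s$ is cuspidal. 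Since $\xi$ commutes with $g_s$, we have $\xi \in \mathfrak{l}^*_{\mathbb{C}}$; moreover $Z_{G^\vee_{\mathbb{C}}}(\xi)$ contains $g_s$ and is commutative, hence lies in $L^\vee$, so $\xi$ is regular in $\mathfrak{l}^*_{\mathbb{C}}$ as well. Thus $(g_u,\xi)$ represents a point of $\mathcal{Z}^{L^\vee}_{\mathrm{reg}}/L^\vee_{\mathbb{C}} = BFM(L^\vee)$ lying in the fiber $\pi_h^{-1}(g_s)$, whose image under $T(L^\vee) \hookrightarrow T(G^\vee)$ recovers the starting orbit. Since $g_s$ is the semisimple Jordan part of $g$ (determined up to the $W$-action resolved by fixing a standard representative Levi), the pair $(L, g_s)$ is canonically associated to the leaf.

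Smoothness of the foliation reduces to algebraicity of the assignment $(g,\xi) \mapsto g_s$ on $\mathcal{Z}_{\mathrm{reg}}$, which follows from the algebraic splitting of the commutative centralizer family $\{Z_{G^\vee_{\mathbb{C}}}(\xi)\}_\xi$ into torus and unipotent parts. Equivalently, the leaves are the integral submanifolds of the Lagrangian distribution tangent to the Hamiltonian flows of the Poisson-commutative subalgebra $\mathbb{C}[T^\vee_{\mathbb{C}}]^W \subset \mathcal{O}(BFM(G^\vee))$ pulled back from $G^\vee_{\mathbb{C}}/\!/G^\vee_{\mathbb{C}}$ through $\pi_h$. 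Here lies the main difficulty: $\pi_h$ itself is singular over the discriminant locus where the Levi type jumps, so smoothness of the foliation does \emph{not} follow formally from Poisson-integrability of $\pi_h$; one must argue that the Hamiltonian flows integrate to a smooth algebraic action whose orbits coherently glue the leaves coming from different Levis into a single smooth foliation of $BFM(G^\vee)$.
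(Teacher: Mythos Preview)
Your approach via Jordan decomposition in the centralizer model $\mathcal{Z}_{\mathrm{reg}}/G^\vee_{\mathbb{C}}$ is genuinely different from the paper's, which works entirely in the Toda model $T(G^\vee)=N\backslash\mathcal{T}^\vee/N$ and exhibits the leaves as the fibers of the projection $N\backslash\mathcal{T}^\vee/N \to N\backslash G^\vee_{\mathbb{C}}/N$. The paper then uses the Bruhat decomposition $G^\vee_{\mathbb{C}}=\coprod_w N\cdot wT^\vee_{\mathbb{C}}\cdot N$ together with a short root-system computation to show that the only Weyl elements $w$ for which the cell meets $\mathcal{T}^\vee$ are $w=w_0w_L^{-1}$ for standard Levis $L$; this simultaneously identifies each leaf with the image of $T(L^\vee)$ and reads off the central parameter. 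Your Jordan-decomposition picture is conceptually attractive and, if it can be made to work, would give a more invariant description; the paper's Bruhat argument is more computational but self-contained.

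There is, however, a real gap at the sentence ``whose image under $T(L^\vee)\hookrightarrow T(G^\vee)$ recovers the starting orbit.'' You have produced from $(g,\xi)\in\mathcal{Z}_{\mathrm{reg}}$ a point $(g_u,\xi)$ of $BFM(L^\vee)$ in the $\pi_h$-fiber over $g_s$, but the open embedding $T(L^\vee)\hookrightarrow T(G^\vee)$ is defined on the Toda side through the cell $N\cdot w_0w_L^{-1}L^\vee_{\mathbb{C}}\cdot N$, and you have not verified what it becomes after transporting through the Toda isomorphisms on both ends. There is no evident map $\mathcal{Z}^{L^\vee}_{\mathrm{reg}}\to\mathcal{Z}^{G^\vee}_{\mathrm{reg}}$ in the centralizer models (regularity in $\mathfrak{l}^{\vee*}$ does not imply regularity in $\mathfrak{g}^{\vee*}$), so ``multiply $g_u$ by the central element $g_s$'' is not a priori the correct description of the induced map on $BFM$ spaces. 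Establishing that compatibility seems to require exactly the Bruhat/Weyl analysis the paper performs. The paper's route also handles smoothness more directly, since the foliation is literally the fiber structure of a map on a smooth manifold; you correctly flag that in your framework smoothness across the Levi-type jumps is still outstanding.
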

\begin{proof}
The leaves are the fibers of $N\backslash\mathcal{T}^\vee/N\to N\backslash G^\vee_{\mathbb{C}}/N$, 
and induction on the semi-simple rank reduces us to checking that the part of $ \mathcal{T}^\vee$ 
which does \emph{not} come from any $T(L^\vee)$, for a proper $L\subset G$, lives over the 
center of $G^\vee_{\mathbb{C}}$.  

Omit $^\vee$ from the notation and choose $(g,\xi)\in \mathcal{T}$. From 
$G_{\mathbb{C}} =\coprod_{w} N\cdot wT_{\mathbb{C}}\cdot N$, we may take $g \in 
wT_{\mathbb{C}}$ for some $w\in W$. Split 
$\mathfrak{g}^*_{\mathbb{C}} = \mathfrak{n}^*\oplus\mathfrak{t}^*_{\mathbb{C}} \oplus \mathfrak{n}^*_+$; 
then, 
\begin{align}
\xi &= \chi + \eta +\nu, &\text{for some } \eta\in \mathfrak{t}_{\mathbb{C}},\: &\nu\in \mathfrak{n}_+^* 
	\nonumber\\
g\xi g^{-1} &= \chi + w(\eta) + \nu', &\text{for some }\nu'\in \mathfrak{n}_+^* &\nonumber
\end{align} 
whence we see that $w$ sends each simple negative root either to a simple negative root, or 
to a positive root. If $w=1$, then $g\in T_{\mathbb{C}}$ centralizes $\chi \pmod{\mathfrak{b}^*_+}$ 
and thus lies in the center of $G_{\mathbb{C}}$. Otherwise, I claim that $w=w_0w_L^{-1}$, for the Levi 
$L$ whose negative simple roots stay negative. Equivalently, the unique simple root system 
of $\mathfrak{g}$ comprising the simple negative roots of $L$ and otherwise only positive roots,  
is the $w_L$-transform of the positive root system. This can be seen by choosing a point 
$\zeta + \varepsilon$, with $\zeta$ generic on the $L$-fixed face of the dominant Weyl chamber, 
and $\varepsilon$ a dominant regular displacement: $w_L(\zeta+\varepsilon)$ must be in the dominant 
chamber of the new root system.      
\end{proof}

\begin{example}[$G=\mathrm{SU}_2$]
The dual complex group is $G^\vee_{\mathbb{C}}=\mathrm{PSL}_2(\mathbb{C})$, whose $BFM$ space 
is the blow-up of $\mathbb{C}\times \mathbb{C}^\times/\{\pm 1\}$ at $(0,1)$, 
with the proper transform of the zero-section $\{0\}\times \mathbb{C}^\times/\{\pm 1\}$ 
removed. This is the Atiyah-Hitchin manifold studied in \cite{athit}. 
The $\mathbb{Z}/2$-action identifies $(\xi, z)$ with $(-\xi,z^{-1})$. Projection 
to the line of co-adjoint orbits is given by the Toda Hamiltonian $\xi^2$. 

The Toda inclusion of $T^*T^\vee_{\mathbb{C}}\cong \mathbb{C}\times\mathbb{C}^\times$ sends 
a point $(u,q)$ to 
\[
\xi^2 = u^2-q,\qquad  
\frac{z+z^{-1}}{4} = \frac{u^2}{q} -\frac{1}{2}
\]
(A match of signs is  required between $z$ and $\xi$.) The induced 
leaves of constant $q$ are given by  
\[
\xi = q\frac{\sqrt{z} -\sqrt{z}^{-1}}{2},
\] 
after lifting to the coordinates $\xi, \sqrt{z}$ for the double-cover maximal torus in 
$\mathrm{SL}_2$. We recognize here the (graph of the differentiated potential in the) 
$S^1$-equivariant mirror of the flag variety $\mathbb{P}^1$. 

The one remaining leaf in $BFM(\mathrm{PSU}_2)$ is the trivial representation of $\mathrm{SU}_2$;
it is the proper transform of $T^*_1\mathbb{C}^\times/\{\pm1\}$, the image in $\mathbb{C}\times 
\mathbb{C}^\times/\{\pm 1\}$ of) the cotangent fiber at $1$. If we switch instead to $\mathrm{PSU}(2)$,  
the new $BFM$ space (on the Langlands dual side) is a double cover of the former, and there is a 
new cuspidal leaf over the central point $(-\mathrm{I}_2)\in \mathrm{SU}_2$, corresponding to the 
sign representation of $\pi_1\mathrm{PSU}_2$. 
\end{example}

\subsection{Torus-equivariant flag varieties} Restricting the $G$-action to $T$, 
the flag manifold $G/L$ is a transformation from $L$-gauge theory to $T$-gauge theory, 
given by composition of the symplectic induction and string topology domain walls: 
\begin{equation}\label{compos}
\xymatrix{
T(L^\vee) \ar@{^{(}->}[r]^{\mathrm{\:SInd\:\:}} & T(G^\vee) \ar[r]_<<<<{\mathrm{Toda}}^<<<<{\sim}& BFM(G) 
\ar[r]^<<<<{\mathrm{ST}} 	& BFM(T^\vee) =T^*T^\vee_{\mathbb{C}}
} 
\end{equation}
The equivariant mirror is a family of $2D$ TQFTs, which can be defined, for instance, by a 
family of complex manifolds with potentials parametrized by the Lie algebra $\mathfrak{t}_{\,\mathbb{C}}$. 
This family reflects the $H^*(BT)$-module structure on equivariant quantum cohomology. 
When $\mathfrak{F}(G/L)$ has been represented by an object $\Lambda\in \sqrt{\mathfrak{Coh}}
(T^*T^\vee_{\mathbb{C}})$, 
the family comes from the projection of $T^*T^\vee_{\mathbb{C}}$ to the cotangent fiber, 
and the TQFTs are the fibers of $\Lambda$ over $\mathfrak{t}_{\,\mathbb{C}}$, the 
$\Hom$ categories with the constant sections of $T^*T^\vee_{\mathbb{C}}$.  

To recover this family of mirrors from the double domain wall~\eqref{compos}, we must use it 
to pair two Lagrangians, in $T(L^\vee)$ and in $T^*T^\vee_{\mathbb{C}}$. The Lagrangians are

\begin{itemize}
\item the Lagrangian leaf $\Lambda(q)\subset BFM(L^\vee)$ over a point $q$ in the 
center of $L^\vee_{\mathbb{C}}$, describing a cuspidal representation of $L$  
($q$ is also the quantum parameter for $G/L$); 
\item the constant Lagrangian section $S_\xi$ of $T^*T^\vee_{\mathbb{C}}$, with fixed value 
$\xi\in \mathfrak{t}_{\,\mathbb{C}}$. 
\end{itemize}

Note that $S_\xi$ is the differential of a multi-valued character $\xi\circ\log: T^\vee_{\mathbb{C}}\to 
\mathbb{C}$.

\begin{remark}
The relevant TQFT picture is a sandwich with triple-decker filling: 
the base slice is the representation $\mathfrak{Vect}_q$ of $L$ corresponding to $\Lambda(q)$, 
a boundary condition for $L$-gauge theory. The filling of the sandwich is a triple layer of 
$L,G,T$ gauge theories, separated by the $\mathrm{SInd}$ and string topology domain walls in 
\eqref{compos}. The sandwich  is topped with the slice $S_\xi$, a boundary condition for 
$T$-gauge theory. Its underlying representation category is null, if $\xi\neq 0$; $S_\xi$ 
is a deformation of the regular representation of $T$ by the multi-valued potential $\xi\circ\log$. 
\end{remark}

\subsection{Rietsch mirrors}
Building on ideas of Peterson and earlier calculations of Givental-Kim, Ciocan-Fontanine, 
Kostant and Mihalcea \cite{gk, cf, kos, mih}, Rietsch \cite{r} proposed torus-equivariant 
complex mirrors for all flag varieties $G/L$. 

Let us recover these from my story by computing the answer outlined above. 
Recall (\S\ref{altind}) the Lagrangian correspondence 
\[
T^*T_{\mathbb{C}} \leftarrow B_+\times \mathfrak{b}_{+,\mathrm{reg}} \to T^*_\mathrm{reg}G_{\mathbb{C}},
\]
appearing in the alternate model for the string topology induction. Compose this with the Toda 
construction to define the following holomorphic Lagrangian correspondence 
between $T(G)$ and $BFM(T) = T^*T_{\mathbb{C}}$: 
\begin{equation}\label{rietsch}
\xymatrix{
& \mathcal{T}\ar[dl]_P\ar[dr]  & & B_+\times \mathfrak{b}_{+,\mathrm{reg}}\ar[dl] \ar[dr]^p& \\
T(G) & & T^*_{\mathrm{reg}}G_{\mathbb{C}} & &  T^*T_{\mathbb{C}}
}
\end{equation}

\begin{proposition}\label{altcor}
Correspondence \eqref{rietsch} is the composition $\mathrm{ST}\circ \mathrm{Toda}$ 
of \eqref{compos}.  
\end{proposition}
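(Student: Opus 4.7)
The plan is to collapse each correspondence to a single Lagrangian subvariety of $T(G)\times T^*T_{\mathbb{C}}$, and then match them by an explicit computation on the Zariski-dense open regular semisimple locus, extending by the Lagrangian property.

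First, the correspondence \eqref{rietsch}, composed over $T^*_{\mathrm{reg}}G_{\mathbb{C}}$, is represented by the intersection $R:=\mathcal{T}\cap(B_+\times\mathfrak{b}_{+,\mathrm{reg}})$, with projections $[\,\cdot\,]_{N\times N}\colon R\to T(G)$ and $p\colon R\to T^*T_{\mathbb{C}}$. Second, for $\mathrm{ST}\circ\mathrm{Toda}$, I compose Toda (diagram \eqref{todabfm}) with ST (from \S\ref{altind}) successively over $BFM(G)$ and over $T^*_{\mathrm{reg}}G_{\mathbb{C}}/\!/_{\mathrm{ad}}B_+$; using that $B_+\times\mathfrak{b}_{+,\mathrm{reg}}$ is $B_+$-invariant under conjugation, the composition is represented by
\[
S:=\bigl\{(a,c)\in(\mathcal{T}\cap\mathcal{Z}_{\mathrm{reg}})\times(\mathcal{Z}_{\mathrm{reg}}\cap(B_+\times\mathfrak{b}_{+,\mathrm{reg}}))\mid a,\,c\text{ are }G_{\mathbb{C}}\text{-conjugate}\bigr\},
\]
with maps $[a]_{N\times N}$ to $T(G)$ and $p(c)$ to $T^*T_{\mathbb{C}}$, modulo the $N\times N$-action on $a$ and $B_+$-conjugation on $c$.

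To compare $R$ and $S$, take $(g,\xi)\in R$ with Bruhat factorization $g=tu$, $t\in T_{\mathbb{C}}$, $u\in N_+$, and decompose $\xi=\chi+\xi_{\mathfrak{t}}$ with $\xi_{\mathfrak{t}}\in\mathfrak{t}^*$ (using $\mathcal{T}$ and the identification $\mathfrak{b}_+=\mathfrak{t}^*\oplus\mathfrak{n}_-^*$). Theorem~\ref{todaisom} yields a unique $\nu\in N$ with $a:=(\nu g,\xi)\in\mathcal{T}\cap\mathcal{Z}_{\mathrm{reg}}$. By Kostant's slice theorem, the $G_{\mathbb{C}}$-orbit of $\xi=\chi+\xi_{\mathfrak{t}}$ meets $\mathfrak{t}^*$ exactly in the Weyl orbit of $\xi_{\mathfrak{t}}$; over the regular semisimple locus, pick $h\in G_{\mathbb{C}}$ with $h\xi h^{-1}=\xi_{\mathfrak{t}}$, so that $c:=hah^{-1}=(h(\nu g)h^{-1},\xi_{\mathfrak{t}})$ lies in $T_{\mathbb{C}}\times\mathfrak{t}^*\subset\mathcal{Z}_{\mathrm{reg}}\cap(B_+\times\mathfrak{b}_{+,\mathrm{reg}})$, and $(a,c)\in S$. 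The key compatibility $p(g,\xi)=(t,\xi_{\mathfrak{t}})=p(c)$ reduces to the equality $h(\nu g)h^{-1}=t$; this follows from the uniqueness of the Jordan decomposition of $\nu g=\nu(tu)$ (whose semisimple part is conjugate to $t$) and compatibility of the Bruhat factorization with diagonalization inside the connected centralizer $Z(\xi_{\mathfrak{t}})=T_{\mathbb{C}}$.

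The main obstacle is extending the identification beyond the regular semisimple open set, to accommodate the nilpotent strata of $\chi+\mathfrak{t}^*$. The cleanest resolution exploits that both $R$ and the image of $S$ are Lagrangian subvarieties of $T(G)\times T^*T_{\mathbb{C}}$ by construction (as compositions of Lagrangian correspondences arising from symplectic reductions), both of dimension $2\dim T$, and that they coincide on the Zariski-open regular semisimple locus by the computation above; since both are closed in the ambient holomorphic symplectic manifold, they must coincide as Lagrangian subschemes, proving the proposition.
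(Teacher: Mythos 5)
There is a genuine gap, and it sits exactly where the paper's proof does its real work. The paper's argument turns on the identification
\[
\bigl(B_+\times \mathfrak{b}_{+,\mathrm{reg}}\bigr)\cap \mathcal{T}
	=\bigl\{(b,\beta)\in B_+\times\mathfrak{b}_+\;\big|\;\beta\in \chi+\mathfrak{t}_{\,\mathbb{C}},\ b\in Z(\beta)\bigr\},
\]
i.e.\ the fiber product representing \eqref{rietsch} already lies \emph{inside} the regular centralizer $\mathcal{Z}_{\mathrm{reg}}$ and is a slice for the conjugation $B_+$-action on $\mathcal{Z}_{\mathrm{reg}}(B_+)$; this is what lets one match it against the doubly-reduced composition through $BFM(G)$ and $T^*_{\mathrm{reg}}G_{\mathbb{C}}/\!/_{\mathrm{Ad}}B_+$, uniformly and with no genericity assumption. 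You never establish this, and its absence breaks your key computation: for $(g,\xi)\in R$ you carry along the Kostant element $\nu\in N$ and then assert that $h(\nu g)h^{-1}=t$ because ``the semisimple part of $\nu(tu)$ is conjugate to $t$.'' That is false for a general $\nu\in N$ and $tu\in B_+$ (already in $\mathrm{SL}_2$, the trace of $\nu t u$ is $a+a^{-1}+bc$, so its eigenvalues are not those of $t$). What saves the statement is precisely that on $R$ one has $g\in Z(\xi)$, hence $\nu=1$ and $g=tu$ is semisimple and $B_+$-conjugate to $t$ --- but that is the identification above, which you would need to prove first. Even then, $h(\nu g)h^{-1}$ is only pinned down up to the Weyl group by the choice of $h$, so the matching of $p$-images needs more care than ``compatibility of the Bruhat factorization with diagonalization.''

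Two further points are asserted rather than proved. First, your closure step needs both correspondences to be irreducible (or at least to have no components supported over the non-regular-semisimple locus) with dense regular semisimple part; agreeing on a common dense open set does not by itself force two equidimensional closed subvarieties to coincide. Second, ``Lagrangian by construction'' for a composition of Lagrangian correspondences requires clean/transverse intersection of the relevant coisotropics; the paper flags exactly this issue and resolves it by factoring through the partial reductions $T^*_{\mathrm{reg}}G_{\mathbb{C}}/\!/_{\mathrm{Ad}}N$ and $T^*_{\mathrm{reg}}G_{\mathbb{C}}/\!/_{\mathrm{Ad}}B_+$, whereas your set $S$ collapses the intermediate reduction to the condition ``$a$ and $c$ are $G_{\mathbb{C}}$-conjugate'' without checking that this composition is transverse or that it computes the same Lagrangian. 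I would recommend restructuring the argument around the displayed identification: once you know the intersection is the Kostant-type slice $\{\beta\in\chi+\mathfrak{t}_{\,\mathbb{C}},\,b\in Z(\beta)\}$, both the comparison with the $B_+$-reduced composition and the compatibility with the Toda isomorphism follow on all of $R$, with no need for the regular semisimple restriction or the closure argument.
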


\begin{proof}[Sketch of proof.] In the jagged triangle of correspondences below, 
the left edge is the Toda isomorphism, the right edge the correspondence \eqref{rietsch} 
and the bottom edge the string topology domain wall. The long, counterclockwise way from 
top to right involves division by the complementary subgroups $N$ and $B_+$ of $G_{\mathbb{C}}$; 
so it seems reasonable that the composition should agree with the undivided 
correspondence~\eqref{rietsch} on the right edge: 
\begin{equation*}
\xymatrixcolsep{.2cm}
\xymatrix{
 & \mathcal{T}/_\mathrm{Ad}(N)\ar[r]\ar[d] & \framebox{T(G)} & \ar[l] \mathcal{T}\ar[d] & 
 			\mathcal{T}\cap \big(B_+\times \mathfrak{b}_{+,\mathrm{reg}}\big) \ar@{.>}[d]\ar@{.>}[l]\\
\mathcal{Z}_{\mathrm{reg}}/_\mathrm{Ad}(N)\ar[dd]\ar[r] & T^*_{\mathrm{reg}}G_{\mathbb{C}}/\!/_{\mathrm{Ad}} N & 
		\mathcal{Z}_{\mathrm{reg}}(B_+)/B_+ \ar@{.>}[dl]\ar@{.>}[dr]
	& T^*_{\mathrm{reg}}G_{\mathbb{C}} & B_+\times \mathfrak{b}_{+,\mathrm{reg}}\ar[dd]\ar[l] \\
	& \displaystyle{\frac{\mathcal{Z}_{\mathrm{reg}}}{B_+}} \ar[dl]\ar[dr] & 
			& \displaystyle{\frac{B_+\times\mathfrak{b}_{+,\mathrm{reg}}}{B_+}} \ar[dl]\ar[dr]& \\
\framebox{BFM(G)} & & T^*_{\mathrm{reg}}G_{\mathbb{C}}/\!/_{\mathrm{Ad}} B_+ & & \framebox{BFM(T)}
}
\end{equation*}

The argument exploits the regularity of the Lie algebra elements. 
The intersection in the upper right corner comprises the pairs $(b,\beta)\in B_+\times 
\mathfrak{b}_+$ with $b$ centralizing $\beta\in E+ \mathfrak{t}_{\,\mathbb{C}}$. 
($E=\chi$ under $\mathfrak{n}_+\cong \mathfrak{n}^*$.)  That is a slice for the conjugation 
$B_+$-action on the regular centralizer $\mathcal{Z}_{\mathrm{reg}}(B_+)$ in $B_+$, which makes 
clear the isomorphism with the fiber product in the center the triangle; and the map is 
compatible with the Toda isomorphism on the left edge. 
\end{proof}

We now calculate the pairing $S_\xi\subset T(L^\vee)$  and 
$\Lambda(q)\subset T^*T^\vee_{\mathbb{C} }$ by the correspondence~\eqref{rietsch} for 
the dual group $G^\vee$. We do so by computing in $T^*_{\mathrm{reg}}G^\vee_{\mathbb{C}}$
\[
\Hom\big(p^{-1} S_\xi,P^{-1} \Lambda(q)\big).
\]  
The two Lagrangians meet over the intersection 
\[
B^\vee_+\cap \big(N^\vee\cdot w_0w_L^{-1}L^\vee_{\mathbb{C}}\cdot N^\vee\big) \subset G^\vee_{\mathbb{C}}. 
\]
Lift $\xi\circ\log$ to $B^\vee_+$ by $p$; over $B^\vee_+$, $p^{-1}S_\xi$ is the conormal bundle to 
$B^\vee_+\subset G^\vee_{\mathbb{C}}$ shifted by the graph of $d (\xi\circ\log)$. (The shifted 
bundle is well-defined, independently of any local extension of the function $\xi\circ\log$.)   

The Lagrangian $P^{-1}\Lambda(q)$ lives over the open set $N^\vee\cdot w_0w_L^{-1}L^\vee_{\mathbb{C}}
\cdot N^\vee$ in $G^\vee_{\mathbb{C}}$, where it is the shifted co-normal bundle to the submanifold 
\[
M:= N^\vee \cdot w_0w_L^{-1}q\cdot N^\vee \cong \frac{N^\vee\times N^\vee}
{\mathrm{diag}(N^\vee\cap L^\vee_{\mathbb{C}})},
\] 
shifted into $\mathcal{T}$ by the graph of the differential of the following function $f$: 
\[
f: n_1\cdot w_0w_L^{-1} l\cdot n_2 \mapsto \chi (\log n_1 + \log n_2).
\]  

Now, $B_+^\vee$ and $M$ meet transversally in $G^\vee_{\mathbb{C}}$, in a manifold 
isomorphic to a Zariski-open in the flag variety $G^\vee/L^\vee$; this is the $\mathcal{R}_{w_0,w_L}$ 
of \cite{r}-. Transversality permits us to dispense with the conormal bundles, and identify 
$\Hom(S_\xi, \Lambda(q))$ with the pairing, in the cotangent bundles, between graphs of the 
restricted functions to $B^\vee_+\cap M$
\[
\Hom_{T^*(B^\vee_+\cap M)}\left(\Gamma(d(\xi\circ\log)), \Gamma(df)\right); 
\]
this is the matrix factorization category $MF\big(B^\vee_+\cap M; f-\xi\circ\log\big)$. This 
is the Rietsch  mirror of $G/L$.

The last mirror comes with a volume form, which defines the trace 
on $HH_*$. In the Lagrangian correspondence, we need instead a half-volume form on each leaf. 
The two leaves $S_\xi$ and $\Lambda(q)$ do in fact carry natural half-volumes, translation-invariant 
for the groups ($B$ and $N\times N$) and along the cotangent fibers. Rietsch's volume form on the 
mirror $\mathcal{R}_{w_0,w_L}$ comes from the product of these half-volumes.


\small{Department of Mathematics, UC Berkeley, CA 94720, USA}\\
\texttt{teleman@berkeley.edu}

\end{document}